\newcommand{\rrvert}{\vert}
\newcommand{\rrVert}{\Vert}
\newcommand{\llvert}{\vert}
\newcommand{\llVert}{\Vert}
\newcommand{\algorithmname}{Algorithm}
  \newtheorem{lem}{Lemma}
  \newtheorem{cor}{Corollary}
\newtheorem{thmm}{Theorem}
  \newtheorem{prop}{Proposition}
\begin{document}
\begin{frontmatter}

\title{On the role of interaction in sequential Monte Carlo algorithms}
\runtitle{On the role of interaction in SMC algorithms}

\begin{aug}
\author[A]{\inits{N.}\fnms{Nick}~\snm{Whiteley}\corref{}\thanksref
{A,e1}\ead[label=e1,mark]{nick.whiteley@bristol.ac.uk}},
\author[B]{\inits{A.}\fnms{Anthony}~\snm{Lee}\thanksref{B,e2}\ead
[label=e2,mark]{anthony.lee@warwick.ac.uk}}
\and
\author[A]{\inits{K.}\fnms{Kari}~\snm{Heine}\thanksref{A,e3}\ead
[label=e3,mark]{kari.heine@bristol.ac.uk}}
\address[A]{School of Mathematics, University of Bristol, University
Walk, Bristol BS8 1TW, United Kingdom.\\ \printead{e1,e3}}
\address[B]{Department of Statistics, University of Warwick, Coventry
CV4 7AL, United Kingdom.\\ \printead{e2}}
\end{aug}

\received{\smonth{9} \syear{2013}}
\revised{\smonth{7} \syear{2014}}

%
\begin{abstract}
We introduce a general form of sequential Monte Carlo algorithm defined
in terms of a parameterized resampling mechanism. We find that a suitably
generalized notion of the Effective Sample Size (ESS), widely used
to monitor algorithm degeneracy, appears naturally in a study of its
convergence properties. We are then able to phrase sufficient conditions
for time-uniform convergence in terms of algorithmic control of the
ESS, in turn achievable by adaptively modulating the interaction between
particles. This leads us to suggest novel algorithms which are, in
senses to be made precise, provably stable and yet designed to avoid
the degree of interaction which hinders parallelization of standard
algorithms. As a byproduct, we prove time-uniform convergence of the
popular adaptive resampling particle filter.
\end{abstract}

%
\begin{keyword}
\kwd{convergence}
\kwd{hidden Markov model}
\kwd{particle filters}
\end{keyword}
\end{frontmatter}

\section{Introduction}\label{sec1}

At the frontier of computational statistics there is growing interest
in parallel implementation of Monte Carlo algorithms using multi-processor
and distributed architectures. However, the resampling step of sequential
Monte Carlo (SMC) methods \cite{gordon1993novel} (see \cite
{kunsch2013particle}
for a recent overview) which involves a degree of interaction between
simulated ``particles'', hinders their parallelization. So, whilst
multi-processor implementation offers some speed up for SMC, the potential
benefits of distributed computing are not fully realized \cite{lee2010utility}.

Performing resampling only occasionally, a technique originally suggested
for the somewhat different reason of variance reduction \cite{liu1995blind},
alleviates this problem to some extent, but the collective nature
of the resampling operation remains the computational bottleneck.
On the other hand, crude attempts to entirely do away with the resampling
step may result in unstable or even non-convergent algorithms. With
these issues in mind, we seek a better understanding of the relationship
between the interaction structure of SMC algorithms and theoretical
properties of the approximations they deliver. Our overall aim is
to address the following question:\vspace*{6pt}

\emph{To what extent can the degree of interaction between particles
be reduced, whilst ensuring provable stability of the algorithm?}\vspace*{6pt}

Our strategy is to introduce and study an unusually general type of
SMC algorithm featuring a parameterized resampling mechanism. This\vadjust{\goodbreak}
provides a flexible framework in which we are ultimately able to attach
meaning to \emph{degree of interaction} in terms of graph-theoretic
quantities. To address the matter of \emph{provable stability}, we
seek conditions under which the algorithm yields time-uniformly convergent
approximations of prediction filters, and approximations of marginal
likelihoods whose relative variance can be controlled at a linear-in-time
cost.

The general algorithm we study is defined in terms of a family of
Markov transition matrices, $\alpha$, and we refer to the algorithm
itself as $\alpha$SMC. We shall see that through particular choices
of $\alpha$ one obtains, as instances of $\alpha$SMC, well-known
algorithms including sequential importance sampling (SIS), the bootstrap
particle filter (BPF) and the adaptive resampling particle filter
(ARPF) in which resampling is triggered by monitoring some functional
criterion, such as the Effective Sample Size (ESS) \cite{liu1995blind}.

Although the ESS does not necessarily appear in the definition of
the general $\alpha$SMC algorithm, we find that it does appear quite
naturally from the inverse quadratic variation of certain martingale
sequences in its analysis. This allows us to make precise a sense
in which algorithmic control of the ESS can guarantee stability of
the algorithm. Our results apply immediately to the ARPF, but our
study has wider-reaching methodological consequences: in our framework
it becomes clear that the standard adaptive resampling strategy is
just one of many possible ways of algorithmically controlling the
ESS, and we can immediately suggest new, alternative algorithms which
are provably stable, but designed to avoid the type of complete interaction
which is inherent to the ARPF and which hinders its parallelization.
The structure of this paper and our main contributions are as follows.

Section~\ref{sec:aSMC} introduces the general algorithm, $\alpha$SMC.
We explain how it accommodates several standard algorithms as particular
cases and comment on some other existing SMC methods.

Section~\ref{sec:Martingale-approximations-and} presents Theorem~\ref{thmm:convergence}, a general convergence result for $\alpha$SMC.
We give conditions which ensure unbiased approximation of marginal
likelihoods and we elucidate connections between certain invariance
properties of the matrices $\alpha$ and the negligibility of increments
in a martingale error decomposition, thus formulating simple sufficient
conditions for weak and strong laws of large numbers. We also discuss
some related existing results.

Section~\ref{sec:stability} presents our second main result, Theorem~\ref{thmm:L_R_mix}. We show, subject to regularity conditions on the
hidden Markov model (HMM) under consideration, that enforcement of
a strictly positive lower bound on a certain coefficient associated
with ESS of $\alpha$SMC is sufficient to guarantee non-asymptotic,
time-uniform bounds on: (1) the exponentially normalized relative second
moment of error in approximation of marginal likelihoods, and (2) the
$L_{p}$ norm of error in approximation of prediction filters. The
former implies a linear-in-time variance bound and the latter implies
time-uniform convergence. These results apply immediately to the ARPF.

Section~\ref{sec:Discussion} houses discussion and application of
our results. We point out the pitfalls of some naive approaches to
parallelization of SMC and discuss what can go wrong if the conditions
of Theorem~\ref{thmm:convergence} are not met. Three new algorithms,
which adapt the degree of interaction in order to control the ESS
and which are therefore provably stable, are then introduced. We discuss
computational complexity and through numerical experiments examine
the degree of interaction involved in these algorithms and the quality
of the approximations they deliver compared to the ARPF.

\section{\texorpdfstring{$\alpha$}{alpha}SMC}
\label{sec:aSMC}

A hidden Markov model (HMM) with measurable state space $ (\mathsf
{X},\mathcal{X} )$
and observation space $ (\mathsf{Y},\mathcal{Y} )$ is a
process $ \{  (X_{n},Y_{n} );n\geq0 \} $ where
$ \{ X_{n};n\geq0 \} $ is a Markov chain on $\mathsf{X}$,
and each observation $Y_{n}$, valued in $\mathsf{Y}$, is conditionally
independent of the rest of the process given $X_{n}$. Let $\mu_{0}$
and $f$ be, respectively, a probability distribution and a Markov kernel
on $ (\mathsf{X},\mathcal{X} )$, and let $g$ be a Markov
kernel acting from $ (\mathsf{X},\mathcal{X} )$ to $
(\mathsf{Y},\mathcal{Y} )$,
with $g(x,\cdot)$ admitting a density, denoted similarly by $g(x,y)$,
with respect to some dominating $\sigma$-finite measure. The HMM
specified by $\mu_{0}$, $f$ and~$g$, is
%
\begin{eqnarray}\label{eq:HMM}
 X_{0}\sim\mu_{0}(\cdot),\qquad X_{n}|
\{X_{n-1}=x_{n-1}\}& \sim& f(x_{n-1},\cdot),\qquad n
\geq1,
\nonumber
\\[-8pt]
\\[-8pt]
\nonumber
 Y_{n}|\{ X_{n}=x_{n}\} &\sim&
g(x_{n},\cdot),\qquad n\geq 0.
\end{eqnarray}

We shall assume throughout that we are presented with a fixed observation
sequence $ \{ y_{n};n\geq0 \} $ and write
\[
g_{n}(x):=g(x,y_{n}),\qquad n\geq0.
\]
The following assumption imposes some mild regularity which ensures
that various objects appearing below are well defined. It shall be
assumed to hold throughout without further comment.

\begin{assumption*}
\textup{(A1)} For each $n\geq0$, $\sup_{x}g_{n}(x)<+\infty$
and $g_{n}(x)>0$ for all $x\in\mathsf{X}$.
\end{assumption*}

We take as a recursive definition of the \emph{prediction filters},
the sequence of distributions $ \{ \pi_{n};n\geq0 \} $ given
by
%
\begin{eqnarray}
\label{eq:filtering_recursion}  \pi_{0}&:=&\mu_{0},
\nonumber
\\[-8pt]
\\[-8pt]
\nonumber
\pi_{n} (A )&:=&\frac{\int_{\mathsf{X}}\pi_{n-1}
(\mathrm{d}x )g_{n-1}(x)f(x,A)}{\int_{\mathsf{X}}\pi_{n-1} (\mathrm{d}x
)g_{n-1}(x)},\qquad  A\in\mathcal{X}, n\geq1,
\end{eqnarray}
and let $ \{ Z_{n};n\geq0 \} $ be defined by
%
\begin{equation}
Z_{0}:=1, \qquad Z_{n}:=Z_{n-1}\int
_{\mathsf{X}}\pi_{n-1} (\mathrm{d}x )g_{n-1} (x ),\qquad n
\geq1.\label{eq:Z_recusion}
\end{equation}
Due to the conditional independence structure of the HMM, $\pi_{n}$
is the conditional distribution of $X_{n}$ given $Y_{0:n-1}=y_{0:n-1}$;
and $Z_{n}$ is the marginal likelihood of the first $n$ observations,
evaluated at the point $y_{0:n-1}$. Our main computational objectives
are to approximate $ \{ \pi_{n};n\geq0 \} $ and $ \{
Z_{n};n\geq0 \}$.

\subsection{The general algorithm}

With population size $N\geq1$, we write $[N]:=\{1,\ldots,N\}$. \emph{To
simplify presentation, whenever a summation sign appears without the
summation set made explicit, the summation set is taken to be $[N]$,
for example, we write $\sum_{i}$ to mean $\sum_{i=1}^{N}$.}

The $\alpha$SMC algorithm involves simulating a sequence $ \{ \zeta
_{n};n\geq0 \} $
with each $\zeta_{n}= \{ \zeta_{n}^{1},\ldots,\zeta_{n}^{N} \} $
valued in $\mathsf{X}^{N}$. Denoting $\mathbb{X}:= (\mathsf
{X}^{N} )^{\mathbb{N}}$,
$\mathcal{F}^{\mathbb{X}}:= (\mathcal{X}^{\otimes N} )^{\otimes
\mathbb{N}}$,
we shall view $ \{ \zeta_{n};n\geq0 \} $ as the canonical
coordinate process on the measurable space $ (\mathbb{X},\mathcal
{F}^{\mathbb{X}} )$,
and write $\mathcal{F}_{n}$ for the $\sigma$-algebra generated by
$ \{ \zeta_{0},\ldots,\zeta_{n} \} $. By convention, we
let $\mathcal{F}_{-1}:=\{\mathbb{X},\varnothing\}$ be the trivial $\sigma
$-algebra.
The sampling steps of the $\alpha$SMC algorithm, described below,
amount to specifying a probability measure, say $\mathbb{P}$, on
$ (\mathbb{X},\mathcal{F}^{\mathbb{X}} )$. Expectation
w.r.t. $\mathbb{P}$
shall be denoted by $\mathbb{E}$.

Let $\mathbb{A}_{N}$ be a non-empty set of Markov transition matrices,
each of size $N\times N$. For $n\geq0$ let $\alpha_{n}\dvtx \mathbb
{X}\rightarrow\mathbb{A}_{N}$
be a matrix-valued map, and write $\alpha_{n}^{ij}$ for the $i$th
row, $j$th column entry so that for each $i$ we have $\sum_{j}\alpha_{n}^{ij}=1$
(with dependence on the $\mathbb{X}$-valued argument suppressed).
The following assumption places a restriction on the relationship
between $\alpha$ and the particle system $ \{ \zeta_{n};n\geq0
\} $.
\begin{assumption*}
\textup{(A2)} For each $n\geq0$, the entries of $\alpha_{n}$
are all measurable with respect to $\mathcal{F}_{n}$
\end{assumption*}
Intuitively, the members of $\mathbb{A}_{N}$ will specify different
possible interaction structures for the particle algorithm and under
\textup{(A2)}, each $\alpha_{n}$ is a random matrix chosen from $\mathbb{A}_{N}$
according to some deterministic function of $ \{ \zeta_{0},\ldots,
\zeta_{n} \} $.
Examples are given below. We shall write $\mathbf{1}_{1/N}$ for the
$N\times N$ matrix which has $1/N$ as every entry and write $\mathrm{Id}$
for the identity matrix of size apparent from the context in which
this notation appears. We shall occasionally use $\mathrm{Id}$ also to denote
identity operators in certain function space settings. Let $\mathcal{M}$,
$\mathcal{P}$ and $\mathcal{L}$ be, respectively, the collections
of measures, probability measures and real-valued, bounded, $\mathcal
{X}$-measurable
functions on $\mathsf{X}$. We write
\[
\llVert \varphi\rrVert :=\sup_{x}\bigl\llvert \varphi(x)
\bigr\rrvert ,\qquad \operatorname{osc}(\varphi):=\sup_{x,y}\bigl
\llvert \varphi(x)-\varphi(y)\bigr\rrvert
\]
and
%
\begin{equation}
\mu(\varphi):=\int_{\mathsf{X}}\varphi(x)\mu(\mathrm{d}x) \qquad\mbox{for any }
\varphi\in\mathcal{L}, \mu\in\mathcal{M}.\label{eq:mu(phi)_notation}
\end{equation}

\begin{rem*}
Note that $\mathbb{X}$, $\mathcal{F}^{\mathbb{X}}$, $\mathcal{F}_{n}$,
$\mathbb{P}$, $\alpha$ and various other objects depend on $N$,
but this dependence is suppressed from the notation. Unless specified
otherwise, any conditions which we impose on such objects should be
understood as holding for all $N\geq1$.
\end{rem*}
Let $ \{ W_{n}^{i};i\in[N],n\geq0 \} $ be defined by the
following recursion:
%
\begin{equation}
W_{0}^{i}:=1, \qquad W_{n}^{i}:=\sum
_{j}\alpha _{n-1}^{ij}W_{n-1}^{j}g_{n-1}
\bigl(\zeta_{n-1}^{j}\bigr),\qquad i\in[N],n\geq 1.\label{eq:W_n_defn}
\end{equation}

The following algorithm implicitly specifies the law $\mathbb{P}$
of the $\alpha$SMC particle system. For each $n\geq1$, the ``Sample''
step should be understood as meaning that the variables $\zeta_{n}=
\{ \zeta_{n}^{i} \} _{i\in[N]}$
are conditionally independent given $ \{ \zeta_{0},\ldots,\zeta
_{n-1} \} $.
The line of Algorithm \ref{alg:aSMC} marked $(\star)$ is intentionally
generic, it amounts to a practical, if imprecise restatement of \textup{(A2)}.
In the sequel, we shall examine instances of $\alpha$SMC which arise
when we consider specific $\mathbb{A}_{N}$ and impose more structure
at line $(\star)$.

\begin{algorithm}[H]
\begin{raggedright}
\quad For $n=0$,
\end{raggedright}

\begin{raggedright}
\qquad For $i=1,\ldots,N$,
\end{raggedright}

\begin{raggedright}
\quad\qquad Set $W_{0}^{i}=1$
\end{raggedright}

\begin{raggedright}
\quad\qquad Sample $\zeta_{0}^{i}\sim\mu_{0}$
\end{raggedright}

\begin{raggedright}
\quad For $n\geq1$,
\end{raggedright}

\begin{raggedright}
$(\star)$\quad\hspace*{-4pt} Select $\alpha_{n-1}$
from $\mathbb{A}_{N}$ according to some functional of $ \{ \zeta
_{0},\ldots,\zeta_{n-1} \} $.
\end{raggedright}

\begin{raggedright}
\qquad For $i=1,\ldots,N$,
\end{raggedright}

\begin{raggedright}
\quad\qquad Set {} $W_{n}^{i}=\sum_{j}\alpha
_{n-1}^{ij}W_{n-1}^{j}g_{n-1}(\zeta_{n-1}^{j})$.
\end{raggedright}

\begin{raggedright}
\quad\qquad Sample $\zeta_{n}^{i}|\mathcal{F}_{n-1}
\sim \frac{\sum_{j}\alpha_{n-1}^{ij}W_{n-1}^{j}g_{n-1}(\zeta
_{n-1}^{j})f(\zeta_{n-1}^{j},\cdot)}{W_{n}^{i}}$.
\end{raggedright}
\caption{$\alpha$SMC}
\label{alg:aSMC}
\end{algorithm}

We shall study the objects
%
\begin{equation}
\pi_{n}^{N}:=\frac{\sum_{i}W_{n}^{i} \delta_{\zeta_{n}^{i}}}{\sum_{i}W_{n}^{i}},
\qquad
Z_{n}^{N}:=
\frac{1}{N}\sum_{i}W_{n}^{i},\qquad
n\geq0,\label{eq:pi^N_andZ^N}
\end{equation}
which as the notation suggests, are to be regarded as approximations
of $\pi_{n}$ and $Z_{n}$, respectively. We shall also be centrally
concerned with the following coefficient, which is closely related
to the ESS,
%
\begin{equation}
\mathcal{E}_{n}^{N}:=\frac{ (N^{-1}\sum_{i}W_{n}^{i}
)^{2}}{N^{-1}\sum_{i} (W_{n}^{i} )^{2}}=\frac{ (N^{-1}\sum_{i}\sum_{j}\alpha_{n-1}^{ij}W_{n-1}^{j}g_{n-1}(\zeta_{n-1}^{j})
)^{2}}{N^{-1}\sum_{i} (\sum_{j}\alpha
_{n-1}^{ij}W_{n-1}^{j}g_{n-1}(\zeta_{n-1}^{j}) )^{2}}, \qquad n
\geq 1,\label{eq:ESS_defn_front}
\end{equation}
and by convention $\mathcal{E}_{0}^{N}:=1$. The second equality in
(\ref{eq:ESS_defn_front}) is immediate from the definition of $W_{n}^{i}$,
see (\ref{eq:W_n_defn}). Note that $\mathcal{E}_{n}^{N}$ is always
valued in $[0,1]$, and if we write
%
\begin{equation}
N_{n}^{\mathrm{eff}}:=N\mathcal{E}_{n}^{N},\label{eq:N_eff}
\end{equation}
we obtain the ESS of Liu and Chen \cite{liu1995blind}, although of course in
a generalized form, since $\mathcal{E}_{n}^{N}$ is defined in terms
of the generic ingredients of $\alpha$SMC. A few comments on generality
are in order. First, for ease of presentation, we have chosen to
work with a particularly simple version of $\alpha$SMC, in which
new samples are proposed using the HMM Markov kernel $f$. The algorithm
is easily generalized to accommodate other proposal kernels. Second,
while we focus on the application of SMC methods to HMM's, our results
and methodological ideas are immediately transferable to other contexts,
for example, via the framework of \cite{smc:meth:DDJ06}.

\subsection{Instances of \texorpdfstring{$\alpha$}{alpha}SMC}
\label{sub:Instances-of-SMC}

We now show how $\alpha$SMC admits SIS, the BPF and the ARPF, as
special cases, through particular choices of $\mathbb{A}_{N}$. Our
presentation is intended to illustrate the structural generality of
$\alpha$SMC,\vadjust{\goodbreak} thus setting the scene for the developments which follow.
The following lemma facilitates exposition by ``unwinding'' the
quantities $ \{ W_{n}^{i} \} _{i\in[N]}$ defined recursively
in (\ref{eq:W_n_defn}). It is used throughout the remainder of the
paper.

\begin{lem}
\label{lem:W_n_representation}For $n\geq1$, $0\leq p<n$ and $i_{n}\in[N]$,
%
\begin{equation}
W_{n}^{i_{n}}=\sum_{ (i_{p},\ldots,i_{n-1} )\in
[N]^{n-p}}W_{p}^{i_{p}}
\prod_{q=p}^{n-1}g_{q}\bigl(
\zeta_{q}^{i_{q}}\bigr)\alpha _{q}^{i_{q+1}i_{q}},\label{eq:unwind}
\end{equation}
and in particular
%
\begin{equation}
W_{n}^{i_{n}}=\sum_{ (i_{0},\ldots,i_{n-1} )\in[N]^{n}}\prod
_{p=0}^{n-1}g_{p}\bigl(
\zeta_{p}^{i_{p}}\bigr)\alpha_{p}^{i_{p+1}i_{p}}.\label{eq:unwind2}
\end{equation}
\end{lem}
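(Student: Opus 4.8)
The plan is to prove (\ref{eq:unwind}) by a backward (downward) induction on $p$, starting from $p = n-1$ and descending to $p = 0$; the special case (\ref{eq:unwind2}) then follows immediately by taking $p = 0$ and using $W_0^{i_0} = 1$ together with the fact that the sum over the empty index set (when $p=n$ would be vacuous) is handled by the recursion's base.

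First I would check the base case $p = n - 1$. Here the claimed identity reads
\[
W_n^{i_n} = \sum_{i_{n-1} \in [N]} W_{n-1}^{i_{n-1}} g_{n-1}(\zeta_{n-1}^{i_{n-1}}) \alpha_{n-1}^{i_n i_{n-1}},
\]
which is exactly the defining recursion (\ref{eq:W_n_defn}) after relabelling the summation index $j \mapsto i_{n-1}$ and the row index $i \mapsto i_n$. So the base case is immediate.

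For the inductive step, suppose (\ref{eq:unwind}) holds for some $p$ with $1 \leq p \leq n-1$; I want to deduce it for $p - 1$. The idea is to substitute the recursion $W_p^{i_p} = \sum_{i_{p-1}} \alpha_{p-1}^{i_p i_{p-1}} W_{p-1}^{i_{p-1}} g_{p-1}(\zeta_{p-1}^{i_{p-1}})$ into the right-hand side of the inductive hypothesis, pull the new sum over $i_{p-1}$ outside, and absorb the extra factor $g_{p-1}(\zeta_{p-1}^{i_{p-1}}) \alpha_{p-1}^{i_p i_{p-1}}$ into the product $\prod_{q=p}^{n-1}$ as its $q = p-1$ term, thereby extending the product to $\prod_{q=p-1}^{n-1}$ and the index tuple from $(i_p, \ldots, i_{n-1})$ to $(i_{p-1}, i_p, \ldots, i_{n-1}) \in [N]^{n-p+1} = [N]^{n-(p-1)}$. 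This is just an interchange of finite sums (all sets are finite, so no convergence issues) followed by relabelling.

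There is no real obstacle here — the only thing to be careful about is bookkeeping of the index ranges: verifying that after the substitution the product indeed runs over $q = p-1, \ldots, n-1$ and that the tuple of summed indices is exactly $(i_{p-1}, \ldots, i_{n-1})$, matching the statement with $p$ replaced by $p-1$. Since everything is a manipulation of finite sums and products with measurable (indeed $\mathcal{F}_n$-measurable, by \textbf{(A2)}) entries, no analytic subtleties arise; the result is purely algebraic.
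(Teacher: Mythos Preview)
Your proof is correct and is exactly the ``simple induction'' the paper alludes to (the paper omits the proof entirely, stating only that it is a simple induction). The backward induction on $p$ with base case $p=n-1$ given by the defining recursion (\ref{eq:W_n_defn}), followed by substitution of (\ref{eq:W_n_defn}) at rank $p$ to pass to $p-1$, is precisely the intended argument; the measurability remark at the end is unnecessary since the identity is purely algebraic.
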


The proof of (\ref{eq:unwind})--(\ref{eq:unwind2}) is a simple induction
and is therefore omitted. From (\ref{eq:unwind2}) and definitions
above, we immediately observe the following corollary.

\begin{cor}
\label{cor:measurability}If ${\textup{(A2)}}$ holds, then $W_{n}^{i}$
must be measurable w.r.t. $\mathcal{F}_{n-1}$ for every $n\geq0$
and $i\in[N]$.
\end{cor}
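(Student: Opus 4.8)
The plan is to derive the measurability of $W_n^i$ directly from the closed-form representation (\ref{eq:unwind2}) in Lemma~\ref{lem:W_n_representation}, rather than from the recursion (\ref{eq:W_n_defn}), since (\ref{eq:unwind2}) expresses $W_n^{i_n}$ purely as a finite sum of products of the quantities $g_p(\zeta_p^{i_p})$ and $\alpha_p^{i_{p+1}i_p}$ for $p = 0, \ldots, n-1$, with no residual dependence on $W$'s.

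First I would fix $n \geq 1$ and $i \in [N]$ (the cases $n=0$ being immediate since $W_0^i \equiv 1$ is trivially $\mathcal{F}_{-1}$-measurable). By (\ref{eq:unwind2}), $W_n^i$ is a finite sum, indexed by $(i_0,\ldots,i_{n-1}) \in [N]^n$, of products $\prod_{p=0}^{n-1} g_p(\zeta_p^{i_p})\alpha_p^{i_{p+1}i_p}$. For each $p \in \{0,\ldots,n-1\}$, the map $\zeta_p^{i_p} \mapsto g_p(\zeta_p^{i_p})$ is $\mathcal{X}$-measurable by assumption \textbf{(A1)} (finiteness and positivity of $g_p$ guarantee it is a well-defined, bounded, measurable function), so $g_p(\zeta_p^{i_p})$ is $\mathcal{F}_p$-measurable, hence $\mathcal{F}_{n-1}$-measurable since $p \leq n-1$ and the filtration is increasing. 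Likewise, by \textbf{(A2)} each entry $\alpha_p^{i_{p+1}i_p}$ is $\mathcal{F}_p$-measurable, hence $\mathcal{F}_{n-1}$-measurable for the same reason. Products and finite sums of $\mathcal{F}_{n-1}$-measurable functions are $\mathcal{F}_{n-1}$-measurable, so $W_n^i$ is $\mathcal{F}_{n-1}$-measurable, as claimed.

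There is no real obstacle here; the only point requiring a moment's care is the bookkeeping on which $\sigma$-algebra each factor lives in. The key observation that makes everything work is that in (\ref{eq:unwind2}) the largest index appearing inside the product is $p = n-1$ (both through $\zeta_{n-1}^{i_{n-1}}$ via $g_{n-1}$ and through $\alpha_{n-1}^{i_n i_{n-1}}$), so every factor is already $\mathcal{F}_{n-1}$-measurable — no appeal to $\mathcal{F}_n$ is needed, which is exactly the content of the corollary and is the small improvement over the one-step recursion (\ref{eq:W_n_defn}), where the naive bound would only give $\mathcal{F}_{n-1}$-measurability anyway but the sampling of $\zeta_{n-1}$ might superficially suggest otherwise. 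One should also note that the sums and products are all finite (over $[N]^n$ and over $p = 0,\ldots,n-1$), so no measurability-of-limits subtleties arise.
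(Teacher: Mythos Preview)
Your proposal is correct and follows exactly the route the paper indicates: the corollary is stated as an immediate consequence of (\ref{eq:unwind2}) and the definitions, and you have simply written out the details of that observation. One trivial remark: the measurability of $g_p$ is part of the basic setup (it is a density of a Markov kernel), not a consequence of $\mathbf{(A1)}$, which concerns only boundedness and positivity; otherwise the argument is as intended.
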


\noindent\textit{Sequential importance sampling}: $\mathbb{A}_{N}=\{\mathrm{Id}\}$\vspace*{6pt}

Since in this case $\mathbb{A}_{N}$ consists of only a single element,
$\alpha$ is actually a deterministic sequence, {\textup{(A2)}} is trivially
satisfied and at line $(\star)$ of Algorithm \ref{alg:aSMC} we
have $\alpha_{n}=\mathrm{Id}$ fixed for all $n\geq0$. In this situation, Lemma~\ref{lem:W_n_representation} gives $W_{n}^{i}=\prod_{p=0}^{n-1}g_{p}(\zeta_{p}^{i})$
for $n\geq1$, so in turn
\[
\pi_{n}^{N}=\frac{\sum_{i}\delta_{\zeta_{n}^{i}}\prod_{p=0}^{n-1}g_{p}(\zeta_{p}^{i})}{\sum_{i}\prod_{p=0}^{n-1}g_{p}(\zeta
_{p}^{i})},\qquad Z_{n}^{N}=
\frac{1}{N}\sum_{i}\prod
_{p=0}^{n-1}g_{p}\bigl(
\zeta_{p}^{i}\bigr),\qquad n\geq1,
\]
and $\alpha$SMC reduces to Algorithm \ref{alg:SIS}.\vadjust{\goodbreak}
\begin{algorithm}
\begin{raggedright}
\quad For $n=0$,
\end{raggedright}

\begin{raggedright}
\qquad For $i=1,\ldots,N$,
\end{raggedright}

\begin{raggedright}
\qquad\quad Set $W_{0}^{i}=1$
\end{raggedright}

\begin{raggedright}
\qquad\quad Sample $\zeta_{0}^{i}\sim\mu_{0}$
\end{raggedright}

\begin{raggedright}
\quad For $n\geq1$,
\end{raggedright}

\begin{raggedright}
\qquad For $i=1,\ldots,N$,
\end{raggedright}

\begin{raggedright}
\qquad\quad Set $W_{n}^{i}=W_{n-1}^{i}g_{n-1}(\zeta
_{n-1}^{i})$
\end{raggedright}

\begin{raggedright}
\qquad\quad Sample $\zeta_{n}^{i}|\mathcal{F}_{n-1}
\sim  f(\zeta_{n-1}^{i},\cdot)$
\end{raggedright}
\caption{Sequential importance sampling}
\label{alg:SIS}\vadjust{\goodbreak}
\end{algorithm}

\noindent\textit{Bootstrap particle filter}: $\mathbb{A}_{N}=\{\mathbf
{1}_{1/N}\}$\vspace*{6pt}

In this case, $\alpha$ is again a deterministic sequence and \textup{(A2)}
is trivially satisfied. At line $(\star)$ we have $\alpha_{n}=\mathbf{1}_{1/N}$
fixed for all $n\geq0$. Lemma~\ref{lem:W_n_representation} gives,
for all $i_{n}\in[N]$,
%
\begin{equation}
W_{n}^{i_{n}}=\sum_{ (i_{0},\ldots,i_{n-1} )\in[N]^{n}}\prod
_{p=0}^{n-1}\frac{g_{p}(\zeta_{p}^{i_{p}})}{N}=\prod
_{p=0}^{n-1} \biggl(\frac{1}{N}\sum
_{i_{p}}g_{p}\bigl(\zeta_{p}^{i_{p}}
\bigr) \biggr), \qquad n\geq 1.\label{eq:bootstrap_W_n^i}
\end{equation}
Note that then $W_{n}^{i}=W_{n}^{j}$ for all $i,j$, so $NW_{n}^{i}=\sum_{j}W_{n}^{j}$
and we obtain, according to (\ref{eq:pi^N_andZ^N}),
%
\begin{equation}
\pi_{n}^{N}=\frac{1}{N}\sum
_{i}\delta_{\zeta_{n}^{i}}, \qquad Z_{n}^{N}=
\prod_{p=0}^{n-1} \biggl(\frac{1}{N}
\sum_{i_{p}}g_{p}\bigl(\zeta
_{p}^{i_{p}}\bigr) \biggr),\qquad  n\geq1,\label{eq:bootstrap_Z_n^N}
\end{equation}
and $\alpha$SMC algorithm reduces to Algorithm \ref{alg:boot_pf}. Since $W_{n}^{i}=W_{n}^{j}$
for all $i,j$, we write by convention the weight update steps only
for $W_{n}^{1}$.\vspace*{6pt}

\begin{algorithm}[b]
\begin{raggedright}
\quad For $n=0$,
\end{raggedright}

\begin{raggedright}
\qquad Set $W_{0}^{1}=1$
\end{raggedright}

\begin{raggedright}
\qquad For $i=1,\ldots,N$,
\end{raggedright}

\begin{raggedright}
\qquad\quad Sample $\zeta_{0}^{i}\sim\mu_{0}$
\end{raggedright}

\begin{raggedright}
\quad For $n\geq1$,
\end{raggedright}

\begin{raggedright}
\qquad Set $W_{n}^{1}=W_{n-1}^{1}\cdot\frac{1}{N}\sum_{i}g_{n-1}(\zeta_{n-1}^{i})$
\end{raggedright}

\begin{raggedright}
\qquad For $i=1,\ldots,N$,
\end{raggedright}

\begin{raggedright}
\qquad\quad Sample $\zeta_{n}^{i}|\mathcal{F}_{n-1}
\sim \frac{\sum_{j}g_{n-1}(\zeta_{n-1}^{j})f(\zeta_{n-1}^{j},\cdot
)}{\sum_{j}g_{n-1}(\zeta_{n-1}^{j})}$
\end{raggedright}
\caption{Bootstrap particle filter}
\label{alg:boot_pf}
\end{algorithm}

\noindent\textit{Adaptive resampling particle filter}: $\mathbb{A}_{N}=\{
\mathrm{Id},\mathbf{1}_{1/N}\}$\vspace*{6pt}

In this case, each $\alpha_{n}$ is allowed to take only the value
$\mathrm{Id}$ or $\mathbf{1}_{1/N}$, with the latter corresponding to resampling,
and vice-versa. The choice between $\mathrm{Id}$ and $\mathbf{1}_{1/N}$ is
made by comparing some functional of the particle system to a threshold
value. We consider the case of the popular ESS-based resampling rule
\cite{liu1995blind}, partly for simplicity, but also because monitoring
of the ESS is especially pertinent to the discussions which follow.
This ARPF arises as an instance of $\alpha$SMC if we take as line
$(\star)$ of Algorithm \ref{alg:aSMC} the rule:
%
\begin{equation}
\alpha_{n-1}:= %
\cases{\mathbf{1}_{1/N}, & \quad $\mbox{if } \displaystyle\frac{ (N^{-1}\sum_{i}W_{n-1}^{i}g_{n-1}(\zeta_{n-1}^{i}) )^{2}}{N^{-1}\sum_{i}
(W_{n-1}^{i}g_{n-1}(\zeta_{n-1}^{i}) )^{2}}<\tau$,\vspace*{2pt}
\cr
\mathrm{Id}, & \quad $\mbox{otherwise}$,}
\label{eq:alpha_ARPF}
\end{equation}
where $\tau\in(0,1]$ is a threshold value. Lemma~\ref{lem:ARPF_A2}
in the \hyperref[sec:Appendix]{Appendix} shows by an inductive argument that the adaptation
rule (\ref{eq:alpha_ARPF}) satisfies \textup{(A2)}. The ARPF is
traditionally expressed in terms of the random times at which resampling
occurs. For completeness, the \hyperref[sec:Appendix]{Appendix} contains derivations of expressions
for $\pi_{n}^{N}$ and $Z_{n}^{N}$ in terms of such times and similar
manipulations can be used to write out the form of $\alpha$SMC in
this case.

Looking back to the expression for $\mathcal{E}_{n}^{N}$ in (\ref
{eq:ESS_defn_front}),
we find:
%
\begin{eqnarray}
\alpha_{n-1}&=&\mathbf{1}_{1/N} \quad \Rightarrow \quad\mathcal
{E}_{n}^{N}=1,\label{eq:alpha_ARPF_1}
\\
\alpha_{n-1}&=&\mathrm{Id}\quad  \Rightarrow \quad\mathcal {E}_{n}^{N}=
\frac{ (N^{-1}\sum_{i}W_{n-1}^{i}g_{n-1}(\zeta
_{n-1}^{i}) )^{2}}{N^{-1}\sum_{i} (W_{n-1}^{i}g_{n-1}(\zeta
_{n-1}^{i}) )^{2}}.\label{eq:alpha_ARPF_2}
\end{eqnarray}
We then adopt the point of view that according to (\ref
{eq:alpha_ARPF})--(\ref{eq:alpha_ARPF_2}),
the ARPF \emph{enforces} the condition: $\inf_{n\geq0}\mathcal
{E}_{n}^{N}\geq\tau>0$,
or equivalently
\[
\inf_{n\geq0}N_{n}^{\mathrm{eff}}\geq N\tau>0,
\]
by construction. This seemingly trivial observation turns out to
be crucial when we address time-uniform convergence of the ARPF in
Section~\ref{sec:stability}, and the condition $\inf_{n\geq0}\mathcal
{E}_{n}^{N}>0$
will appear repeatedly in discussions which lead to the formulation
of new, provably stable algorithms in Section~\ref{sec:Discussion}.

To give some flavour of the kind of algorithms we have in mind, let
$ (B_{L}^{\ell} )_{\ell=1,\ldots, L}$ be a partition of the
set $[N]$ into $L$ clusters, and suppose the matrix $\alpha_{n-1}$
is defined by $\alpha_{n-1}^{ij}=1/|B_{L}^{\ell}|$ if both $i,j\in
B_{L}^{\ell}$
and $0$ otherwise. Then for any $\ell=1,\ldots,L$ and any $i\in
B_{L}^{\ell}$
the new weight $W_{n}^{i}$ and the distribution from which the new
particle $\zeta_{n}^{i}$ is sampled, say $m_{n}^{i}$ (both of which
depend on $\ell$ only), are given by
%
\begin{equation}
W_{n}^{i}=\frac{1}{|B_{L}^{\ell}|}\sum
_{j\in B_{L}^{\ell
}}W_{n-1}^{j}g_{n-1}\bigl(
\zeta_{n-1}^{j}\bigr),\qquad  m_{n}^{i}=
\frac{\sum_{j\in
B_{L}^{\ell}}W_{n-1}^{j}g_{n-1}(\zeta_{n-1}^{j})f(\zeta_{n-1}^{j},\cdot
)}{\sum_{j\in B_{L}^{\ell}}W_{n-1}^{j}g_{n-1}(\zeta_{n-1}^{j})}.\label
{eq:partition_intro}
\end{equation}
Furthermore, in this situation we have
\[
\mathcal{E}_{n}^{N}=\frac{ (N^{-1}\sum_{i}W_{n-1}^{i}g_{n-1}(\zeta
_{n-1}^{i}) )^{2}}{\sum_{\ell=1}^{L}|B_{L}^{\ell}|/N
(|B_{L}^{\ell}|^{-1}\sum_{j\in B_{L}^{\ell}}W_{n-1}^{j}g_{n-1}(\zeta
_{n-1}^{j}) )^{2}}.
\]
It is then clear that a parallel implementation could be possible,
say on $L$ processors, one devoted to each cluster and it remains
to design an efficient partition of $ (B_{L}^{\ell} )_{\ell
=1,\ldots, L}$
of the set $[N]$.\vspace*{6pt}

\noindent\textit{Comments on other algorithms}\vspace*{6pt}

In the engineering literature, a variety of algorithmic procedures
involving distributed computing have been suggested \cite{bolic2005resampling},
including partitioning ideas like (\ref{eq:partition_intro}). ``Local''
particle approximations of Rao--Blackwellized filters have been devised
in \cite{chen2011decentralized} and \cite{johansen2012exact}.
Verg\'e \textit{et al.} \cite{Verge_island_particle} have recently suggested an ``island''
particle algorithm, designed for parallel implementation, in which
there are two levels of resampling and the total population size $N=N_{1}N_{2}$
is defined in terms of the number of particles per island, $N_{1}$,
and the number of islands, $N_{2}$. Interaction at both levels occurs
by resampling, at the island level this means entire blocks of particles
are replicated and/or discarded. They investigate the trade-off between
$N_{1}$ and $N_{2}$ and provide asymptotic results which validate
their algorithms. In the present work, we provide some asymptotic
results in Section~\ref{sec:Martingale-approximations-and} but it
is really the non-asymptotic results in Section~\ref{sec:stability}
which lead us to suggest specific novel instances of $\alpha$SMC
in Section~\ref{sec:Discussion}. Moreover, in general $\alpha$SMC
is distinct from all these algorithms and, other than in some uninteresting
special cases, none of them coincide with the adaptive procedures
we suggest in Section~\ref{sub:Algorithms-with-adaptive}.

\section{Convergence}
\label{sec:Martingale-approximations-and}

In this section, our main objective is to investigate, for general
$\alpha$SMC (Algorithm \ref{alg:aSMC}), conditions for convergence
%
\begin{equation}
Z_{n}^{N}-Z_{n}\rightarrow0 \quad\mbox{and}\quad
\pi_{n}^{N}(\varphi)-\pi _{n}(\varphi)
\rightarrow0,\label{eq:as_convergence_intro}
\end{equation}
at least in probability, as $N\rightarrow\infty$.

In the case of SIS, that is, $\mathbb{A}_{N}=\{\mathrm{Id}\}$, it is easy to
establish (\ref{eq:as_convergence_intro}), since the processes $ \{
\zeta_{n}^{i};n\geq0 \} _{i\in[N]}$
are independent Markov chains, of identical law. On the other hand,
for the bootstrap filter, that is, $\mathbb{A}_{N}=\{\mathbf{1}_{1/N}\}$,
the convergence $\pi_{n}^{N}(\varphi)-\pi_{n}(\varphi)\rightarrow0$,
can be proved under very mild conditions, by decomposing $\pi
_{n}^{N}(\varphi)-\pi_{n}(\varphi)$
in terms of ``local'' sampling errors, see amongst others \cite
{smc:theory:Dm04,smc:the:DM08}
for this type of approach. For instance, for $A\in\mathcal{X}$ we
may write
%
\begin{eqnarray}
\pi_{1}^{N}(A)-\pi_{1}(A) & = & \frac{1}{N}
\sum_{i}\delta_{\zeta
_{1}^{i}}(A)-
\frac{\sum_{i}g_{0}(\zeta_{0}^{i})f(\zeta_{0}^{i},A)}{\sum_{i}g_{0}(\zeta_{0}^{i})}\label{eq:intro_boot_decomp1}
\\
&&{} +  \frac{\sum_{i}g_{0}(\zeta_{0}^{i})f(\zeta_{0}^{i},A)}{\sum_{i}g_{0}(\zeta_{0}^{i})}-\pi_{1}(A).\label{eq:intro_boot_decomp2}
\end{eqnarray}
Heuristically, the term on the r.h.s. of (\ref{eq:intro_boot_decomp1})
converges to zero because given $\mathcal{F}_{0}$, the samples $ \{
\zeta_{1}^{i} \} _{i\in[N]}$
are conditionally i.i.d. according $\frac{\sum_{i}g_{0}(\zeta
_{0}^{i})f(\zeta_{0}^{i},\cdot)}{\sum_{i}g_{0}(\zeta_{0}^{i})}$,
and the term in (\ref{eq:intro_boot_decomp2}) converges to zero because
the samples $ \{ \zeta_{0}^{i} \} _{i\in[N]}$ are i.i.d. according
to $\mu_{0}$. A similar argument ensures that $\pi_{n}^{N}(\varphi)-\pi
_{n}(\varphi)\rightarrow0$,
for any $n\geq0$ and therefore by the continuous mapping theorem
$Z_{n}^{N}-Z_{n}\rightarrow0$, since
\[
Z_{n}=\prod_{p=0}^{n-1}
\pi_{p}(g_{p})\quad \mbox{and}\quad Z_{n}^{N}=
\prod_{p=0}^{n-1}\pi_{p}^{N}(g_{p}).
\]
In the case of $\alpha$SMC, $ \{ \zeta_{n}^{i} \} _{i\in[N]}$
are conditionally independent given $\mathcal{F}_{n-1}$, but we do
not necessarily have either the unconditional independence structure
of SIS, or the conditionally i.i.d. structure of the BPF to work
with.

Douc and Moulines \cite{smc:the:DM08} have established a CLT for the ARPF using an
inductive approach w.r.t. deterministic time periods. Arnaud and Le Gland \cite{arnaud2009smc}
have obtained a CLT for the ARPF based on an alternative multiplicative
functional representation of the algorithm. Convergence of the ARPF
was studied in \cite{del2012adaptive} by coupling the adaptive algorithm
to a reference particle system, for which resampling occurs at deterministic
times. One of the benefits of their approach is that existing asymptotic
results for non-adaptive algorithms, such as central limit theorems
(CLT), can then be transferred to the adaptive algorithm with little
further work. Their analysis involves a technical assumption \cite{del2012adaptive},
Section~5.2,
to deal with the situation where the threshold parameters coincide
with the adaptive criteria. Our analysis of $\alpha$SMC does not
rest on any such technical assumption, and in some ways is more direct,
but we do not obtain concentration estimates or a CLT. Some more detailed
remarks on this matter are given after the statement of Theorem~\ref
{thmm:convergence}.

Crisan and Obanubi \cite{crisan2012particle} studied convergence and obtained a CLT
for an adaptive resampling particle filter in continuous time under
conditions which they verify for the case of ESS-triggered resampling,
without needing the type of technical assumption of \cite{del2012adaptive}.
Their study focuses, in part, on the random times at which resampling
occurs and dealing with the subtleties of the convergence in continuous
time. Our asymptotic $N\rightarrow\infty$ analysis is in some ways
less refined, but in comparison to this and the other existing works,
we analyze a more general algorithm, and it is this generality which
allows us to suggest new adaptive algorithms in Section~\ref{sec:Discussion},
informed by the time-uniform non-asymptotic error bounds in our
Theorem~\ref{thmm:L_R_mix}.

To proceed, we need some further notation involving $\alpha$. Let
us define the matrices: $\alpha_{n,n}:=\mathrm{Id}$ for $n\geq0$, and recursively
%
\begin{equation}
\alpha_{p,n}^{ij}:=\sum_{k}
\alpha_{p+1,n}^{ik}\alpha_{p}^{kj},\qquad (i,j)
\in[N]^{2}, 0\leq p<n,\label{eq:a_pn_defn}
\end{equation}
and the vectors:
%
\begin{equation}
\beta_{n,n}^{i}:=N^{-1},\qquad n\geq0, i
\in[N],\label{eq:beta_n_n_defn}
\end{equation}
and recursively
%
\begin{equation}
\beta_{p,n}^{i}:=\sum_{j}
\beta_{p+1,n}^{j}\alpha_{p}^{ji},\qquad i\in [N],
0\leq p<n.\label{eq:beta_defn}
\end{equation}
Note that since each $\alpha_{n}$ is a random Markov transition matrix,
so is each $\alpha_{p,n}$, and each $ \{ \beta_{p,n}^{i} \}
_{i\in[N]}$
defines a random probability distribution on $[N]$. Moreover, from
these definitions we immediately have the identity
%
\begin{equation}
\beta_{p,n}^{i}=N^{-1}\sum
_{j}\alpha_{p,n}^{ji},\qquad i\in[N], 0\leq p
\leq n.\label{eq:beta_in_terms_of_alpha}
\end{equation}

\begin{assumption*}
\textup{(B)} -- for all $0\leq p\leq n$ and $i\in[N]$, $\beta
_{p,n}^{i}$
is measurable w.r.t. the trivial $\sigma$-algebra $\mathcal{F}_{-1}$.

\textup{(B$^{+}$)} -- assumption \textup{(B)} holds and,
for all $0\leq p\leq n$, $\lim_{N\rightarrow\infty}\max_{i\in[N]}\beta
_{p,n}^{i}=0$.

\textup{(B$^{++}$)} -- every member of $\mathbb{A}_{N}$
admits the uniform distribution on $[N]$ as an invariant distribution
\end{assumption*}
We note the following:
\begin{itemize}
\item Intuitively, (B) ensures that even though $\alpha$ is
a sequence of random Markov transition matrices, the elements of the
probability vector $ \{ \beta_{p,n}^{i} \} _{i\in[N]}$ are
all constants. (B) holds, trivially, when every element
of every $\alpha_{n}$ is measurable w.r.t. $\mathcal{F}_{-1}$, that
is, the sequence $\alpha$ is completely pre-determined. This is
true, for example, when the set $\mathbb{A}_{N}$ consists of only
a single element, as is the case for SIS and the BPF.
\item The $\lim_{N\rightarrow\infty}\max_{i\in[N]}\beta_{p,n}^{i}=0$ part
of (B$^{+}$) is an asymptotic negligibility condition. In
Section~\ref{sub:Ensuring-convergence}, we describe what can go wrong
when this assumption does not hold.
\item (B$^{++}$) does not require the members of $\mathbb{A}_{N}$
to be irreducible, for example, it is satisfied with $\mathbb{A}_{N}=\{
\mathrm{Id}\}$.
\item (B$^{++}$) $\Rightarrow$ (B$^{+}$). To
see this, observe that when (B$^{++}$) holds, every random
matrix $\alpha_{p,n}$, defined in (\ref{eq:a_pn_defn}), also admits
the uniform distribution on $[N]$ as invariant, then using (\ref
{eq:beta_in_terms_of_alpha})
we have $\beta_{p,n}^{i}=N^{-1}\sum_{j}\alpha_{p,n}^{ji}=N^{-1}$
for all $i\in[N]$. The reverse implication is clearly not true in
general.
\item (B$^{++}$) holds when every member of $\mathbb{A}_{N}$
is doubly-stochastic, because such matrices always leave the uniform
distribution invariant. (B$^{++}$) therefore holds
for the ARPF, which has $\mathbb{A}_{N}=\{\mathrm{Id},\mathbf{1}_{1/N}\}$.
\end{itemize}
To get some feel for why (B$^{++}$) is a natural condition
for convergence, note that plugging the particle approximation $\pi
_{n-1}\approx\pi_{n-1}^{N}=\sum_{i}W_{n-1}^{i}\delta_{\zeta
_{n-1}^{i}}/\sum_{i}W_{n-1}^{i}$into
equation (\ref{eq:filtering_recursion}) for the predictor yields
a finite mixture approximation of $\pi_{n}$
\[
\pi_{n}\approx\frac{\sum_{i}W_{n-1}^{i}g_{n-1}(\zeta_{n-1}^{i})f(\zeta
_{n-1},\cdot)}{\sum_{i}W_{n-1}^{i}g_{n-1}(\zeta_{n-1}^{i})}.
\]
Under condition (B$^{++}$) the stochastic matrix $\alpha
_{n-1}\in\mathbb{A}_{N}$
is doubly stochastic, hence
\[
\frac{\sum_{j}W_{n-1}^{j}g_{n-1}^{j}(\zeta_{n-1}^{j})f(\zeta
_{n-1}^{j},\cdot)}{\sum_{j}W_{n-1}^{j}g_{n-1}^{j}(\zeta
_{n-1}^{j})}=\frac{\sum_{i}\sum_{j}\alpha
_{n-1}^{ij}W_{n-1}^{j}g_{n-1}^{j}(\zeta_{n-1}^{j})f(\zeta
_{n-1}^{j},\cdot)}{\sum_{i}\sum_{j}\alpha
_{n-1}^{ij}W_{n-1}^{j}g_{n-1}^{j}(\zeta_{n-1}^{j})}=\frac{\sum_{i}W_{n}^{i}m_{n}^{i}}{\sum_{i}W_{n}^{i}},
\]
where $m_{n}^{i}$ is the distribution from which the new particle
$\zeta_{n}^{i}$ is sampled, as in (\ref{eq:partition_intro}), justifying
the particle approximation
\[
\pi_{n}\approx\pi_{n}^{N}=\frac{\sum_{i}W_{n}^{i}\delta_{\zeta
_{n}^{i}}}{\sum_{i}W_{n}^{i}}.
\]

The main result of this section is:

\begin{thmm}
\label{thmm:convergence} $ $Assume \textup{(A2)}. For any $n\geq0$,
$\varphi\in\mathcal{L}$ and $r\geq1$,

\begin{longlist}[(1)]
\item[(1)] if \textup{(B)} holds, then $\mathbb{E}[Z_{n}^{N}]=Z_{n}$ for
any $N\geq1$,

\item[(2)] if \textup{(B$^{+}$)} holds, then
%
\begin{eqnarray}
 \lim_{N\rightarrow\infty}\mathbb{E} \bigl[\bigl\llvert
Z_{n}^{N}-Z_{n}\bigr\rrvert ^{r}
\bigr]&=&0,\label
{eq:convergence_L_r_statement_gam_weak}
\\
\lim_{N\rightarrow\infty}\mathbb{E} \bigl[\bigl\llvert
\pi_{n}^{N}(\varphi )-\pi_{n}(\varphi)\bigr\rrvert
^{r} \bigr]&=&0,\label
{eq:convergence_L_r_statement_pi_weak}
\end{eqnarray}
and therefore $Z_{n}^{N}\rightarrow Z_{n}$ and $\pi
_{n}^{N}(\varphi)\rightarrow\pi_{n}(\varphi)$
in probability as $N\rightarrow\infty$,

\item[(3)] if \textup{(B$^{++}$)} holds, then
%
\begin{eqnarray}
\sup_{N\geq1}\sqrt{N}\mathbb{E} \bigl[\bigl\llvert
Z_{n}^{N}-Z_{n}\bigr\rrvert ^{r}
\bigr]^{1/r}&<&+\infty,\label{eq:convergence_L_r_statement_gamm}
\\
\sup_{N\geq1}\sqrt{N}\mathbb{E} \bigl[\bigl\llvert
\pi_{n}^{N}(\varphi)-\pi _{n}(\varphi)\bigr\rrvert
^{r} \bigr]^{1/r}&<&+\infty,\label
{eq:convergence_L_r_statement_pi}
\end{eqnarray}
and therefore $Z_{n}^{N}\rightarrow Z_{n}$ and $\pi
_{n}^{N}(\varphi)\rightarrow\pi_{n}(\varphi)$
almost surely, as $N\rightarrow\infty$.
\end{longlist}
\end{thmm}

\begin{rem}
The lack-of-bias property $\mathbb{E}[Z_{n}^{N}]=Z_{n}$ is desirable
since it could be used to validate the use of $\alpha$SMC within
composite SMC/MCMC algorithms such as those of~\cite{andrieu2010particle}.
\end{rem}

\begin{rem}
\label{rem_alpha_conv}Theorem~\ref{thmm:convergence} holds without
any sort of requirement that the entries of each $\alpha_{n}$ converge
as $N\rightarrow\infty$. For example, (B$^{++}$) holds
if for $N$ odd we choose $\mathbb{A}_{N}=\{\mathrm{Id}\}$ and for $N$ even
we choose $\mathbb{A}_{N}=\{\mathbf{1}_{1/N}\}$. As a reflection
of this, and as is apparent upon inspection of the proof, without
further assumption we cannot in general replace $\sup_{N\geq1}$ in
(\ref{eq:convergence_L_r_statement_gamm})--(\ref
{eq:convergence_L_r_statement_pi})
with $\lim_{N\rightarrow\infty}$, because such limits may not exist.
\end{rem}

The following notation is used throughout the remainder of the paper.
Introduce the non-negative kernels
%
\begin{equation}
Q_{n}:\mathsf{X}\times\mathcal{X}\rightarrow\mathbb{R}_{+},\qquad
Q_{n}\bigl(x,\mathrm{d}x^{\prime}\bigr):=g_{n-1}(x)f
\bigl(x,\mathrm{d}x^{\prime}\bigr),\qquad n\geq1,\label
{eq:Q_GM_defn}
\end{equation}
the corresponding operators on functions and measures:
%
\begin{eqnarray}
Q_{n}(\varphi) (x) &:= & \int_{\mathsf{X}}Q_{n}
\bigl(x,\mathrm{d}x^{\prime}\bigr)\varphi \bigl(x^{\prime}\bigr), \qquad\varphi\in
\mathcal{L},\label{eq:Q_op_defn}
\\
\mu Q_{n}(\cdot) & := & \int_{\mathsf{X}}
\mu(\mathrm{d}x)Q_{n}(x,\cdot),\qquad \mu \in\mathcal{M},\label{eq:Q_op_defn-1}
\end{eqnarray}
and for $n\geq1$ and $0\leq p<n$,
%
\begin{equation}
Q_{p,p}:=\mathrm{Id},\qquad Q_{p,n}:=Q_{p+1}\cdots
Q_{n}.\label{eq:Q_semigroup}
\end{equation}
We shall also consider the following scaled versions of these operators:
%
\begin{equation}
\overline{Q}_{n}:=\frac{Q_{n}}{\pi_{n-1}(g_{n-1})},\qquad \overline
 {Q}_{p,p}:=\mathrm{Id},\qquad
\overline{Q}_{p,n}:=\overline{Q}_{p+1}\cdots
\overline{Q}_{n}.\label{eq:Q_bar_defn}
\end{equation}

Then define the non-negative measures
\[
\gamma_{n}:=\mu_{0}Q_{0,n}(\cdot),\qquad n\geq0,
\]
under (A1) we are assured that $\gamma_{n}(1)>0$.
Due to the conditional independence structure of the HMM, it can easily
be checked that
\[
\pi_{n}=\frac{\gamma_{n}}{\gamma_{n}(1)},\qquad Z_{n}=\gamma
_{n}(1),\qquad n\geq0
\]
and
\[
\overline{Q}_{p,n}=\frac{Q_{p,n}}{\pi_{p}Q_{p,n}(1)}.
\]

For $i\in[N]$ and $0\leq p\leq n$, introduce the random measures
%
\begin{equation}
\Gamma_{p,n}^{N}:=\sum_{i}
\beta_{p,n}^{i}W_{p}^{i}
\delta_{\zeta
_{p}^{i}}, \qquad \overline{\Gamma}_{p,n}^{N}:=
\frac{\Gamma
_{p,n}^{N}}{\gamma_{p}(1)},\label{eq:Gamma_defn}
\end{equation}
where $W_{p}^{i}$ is as in (\ref{eq:W_n_defn}). For simplicity of
notation, we shall write $\Gamma_{n}^{N}:=\Gamma_{n,n}^{N}, \overline
{\Gamma}_{n}^{N}:=\overline{\Gamma}_{n,n}^{N}$.
If we define
%
\begin{equation}
\overline{W}_{n}^{i}:=\frac{W_{n}^{i}}{\gamma_{n}(1)},\qquad n\geq
0,\label{eq:W_bar_defn}
\end{equation}
then we have from (\ref{eq:Gamma_defn}),
\[
\overline{\Gamma}_{p,n}^{N}=\sum
_{i}\beta_{p,n}^{i}\overline
{W}_{p}^{i}\delta_{\zeta_{p}^{i}}.
\]

Finally, we observe from (\ref{eq:beta_n_n_defn}) that
\[
\Gamma_{n}^{N}=\sum_{i}
\beta_{n,n}^{i}W_{n}^{i}
\delta_{\zeta
_{n}^{i}}=N^{-1}\sum_{i}W_{n}^{i}
\delta_{\zeta_{n}^{i}}.
\]

\subsection*{Error decomposition}
Throughout this section, let $\varphi\in\mathcal{L}$, $n\geq0$ and
$N\geq1$ be arbitrarily chosen, but then fixed. Define, for $1\leq
p\leq n$
and $i\in[N]$,
\[
\Delta_{p,n}^{i}:=\overline{Q}_{p,n}(\varphi)
\bigl(\zeta_{p}^{i}\bigr)-\frac{\sum_{j}\alpha_{p-1}^{ij}W_{p-1}^{j}\overline{Q}_{p-1,n}(\varphi)(\zeta
_{p-1}^{j})}{\sum_{j}\alpha_{p-1}^{ij}W_{p-1}^{j}\overline
{Q}_{p}(1)(\zeta_{p-1}^{j})},
\]
and $\Delta_{0,n}^{i}:=\overline{Q}_{0,n}(\varphi)(\zeta_{0}^{i})-\mu
_{0}\overline{Q}_{0,n}(\varphi)$,
so that $\mathbb{E} [\Delta_{p,n}^{i}|\mathcal
{F}_{p-1} ]=0$
for any $i\in[N]$ and $0\leq p\leq n$. Then for $0\leq p\leq n$
and $i\in[N]$ set $k:=pN+i$, and
\[
\xi_{k}^{N}:=\sqrt{N}\beta_{p,n}^{i}
\overline{W}_{p}^{i}\Delta_{p,n}^{i},
\]
so as to define a sequence $ \{ \xi_{k}^{N};k=1,\ldots,(n+1)N
\} $.
For $k=1,\ldots,(n+1)N$, let $\mathcal{F}^{(k)}$ be the $\sigma$-algebra
generated by $ \{ \zeta_{p}^{i};  pN+i\leq k,  i\in[N],0\leq
p\leq n \} $.
Set $\mathcal{F}^{(-1)}:=\{\mathbb{X},\varnothing\}$.

The following proposition is the main result underlying Theorem~\ref
{thmm:convergence}.
The proof is given in the \hyperref[sec:Appendix]{Appendix}.

\begin{prop}
\label{prop:martingale} Assume \textup{(A2)} and \textup{(B)}.
We have the decomposition
%
\begin{equation}
\sqrt{N} \bigl[\overline{\Gamma}_{n}^{N}(\varphi)-
\pi_{n}(\varphi) \bigr]=\sum_{k=1}^{(n+1)N}
\xi_{k}^{N},\label{eq:Gamma_telescope}
\end{equation}
where for $k=1,\ldots,(n+1)N$, the increment $\xi_{k}^{N}$ is measurable
w.r.t. $\mathcal{F}^{(k)}$ and satisfies
%
\begin{equation}
\mathbb{E} \bigl[\xi_{k}^{N}|\mathcal{F}^{(k-1)}
\bigr]=\mathbb{E} \bigl[\xi_{k}^{N}|\mathcal{F}_{p-1}
\bigr]=0\qquad \mbox{with } p:= \bigl\lfloor(k-1)/N \bigr\rfloor.\label
{eq:xi_cond_exp}
\end{equation}
For each $r\geq1$ there exists a universal constant $B(r)$ such
that
%
\begin{eqnarray}\label{eq:martingale_burkholder_bound}
& & \mathbb{E} \bigl[\bigl\llvert \overline{\Gamma}_{n}^{N}(
\varphi)-\pi _{n}(\varphi)\bigr\rrvert ^{r}
\bigr]^{1/r}
\nonumber
\\[-8pt]
\\[-8pt]
\nonumber
& &\quad \leq B(r)^{1/r}\sum_{p=0}^{n}
\mathrm{osc} \bigl(\overline {Q}_{p,n}(\varphi) \bigr)\mathbb{E} \biggl[
\biggl\llvert \sum_{i} \bigl(\beta
_{p,n}^{i}\overline{W}_{p}^{i}
\bigr)^{2}\biggr\rrvert ^{r/2} \biggr]^{1/r}.
\end{eqnarray}
\end{prop}

The proof of Theorem~\ref{thmm:convergence}, which is mostly technical,
is given in the \hyperref[sec:Appendix]{Appendix}. Here we briefly discuss our assumptions
and sketch some of the main arguments. Part (1) of Theorem~\ref{thmm:convergence}
follows immediately from (\ref{eq:Gamma_telescope}) and (\ref{eq:xi_cond_exp})
applied with $\varphi=1$. In turn, the martingale structure of
(\ref{eq:Gamma_telescope}) and (\ref{eq:xi_cond_exp}) is underpinned
by the measurability conditions \textup{(A2)} and (B).
The proofs of parts (2) and (3) of Theorem~\ref{thmm:convergence},
involve applying Proposition~\ref{prop:martingale} in conjunction
with the identities
%
\begin{eqnarray}\label
{eq:convergence_sketch_id}
Z_{n}^{N}-Z_{n} & = & \Gamma_{n}^{N}(1)-
\gamma_{n}(1),
\nonumber
\\[-8pt]
\\[-8pt]
\nonumber
\pi_{n}^{N}(\varphi)-\pi_{n}(\varphi) & = &
\frac{\Gamma_{n}^{N}(\varphi
)}{\Gamma_{n}^{N}(1)}-\frac{\gamma_{n}(\varphi)}{\gamma_{n}(1)}.
\end{eqnarray}
In order to prove that these errors convergence to zero in probability,
we show that the quadratic variation term in (\ref
{eq:martingale_burkholder_bound})
converges to zero. In general, we cannot hope for the latter convergence
without some sort of negligibility hypothesis on the product terms
$ \{ \mathrm{osc} (\overline{Q}_{p,n}(\varphi) )\beta
_{p,n}^{i}\overline{W}_{p}^{i};i\in[N] \} $.
Assumption (A1) allows us to crudely upper-bound
$\mathrm{osc} (\overline{Q}_{p,n}(\varphi) )$ and $\overline
{W}_{p}^{i}$;
the measurability condition (B) allows us to dispose of
the expectation in (\ref{eq:martingale_burkholder_bound}); then via
Markov's inequality and the classical equivalence:
\[
\lim_{N\rightarrow\infty}\max_{i\in[N]}\beta_{p,n}^{i}=0
\quad\Leftrightarrow\quad \lim_{N\rightarrow\infty}\sum_{i}
\bigl(\beta _{p,n}^{i} \bigr)^{2}=0,
\]
which holds since $ (\max_{i\in[N]}\beta_{p,n}^{i} )^{2}\leq
\sum_{i} (\beta_{p,n}^{i} )^{2}\leq\max_{i\in[N]}\beta_{p,n}^{i}$,
the negligibility part of (B$^{+}$) guarantees
that $\llvert \Gamma_{n}^{N}(\varphi)-\gamma_{n}(\varphi)\rrvert $
converges to zero in probability. The stronger condition (B$^{++}$)
buys us the $\sqrt{N}$ scaling displayed in part (3). In Section~\ref{sub:Ensuring-convergence},
we discuss what can go wrong when (B$^{+}$) does not hold.

\section{Stability}
\label{sec:stability}

In this section, we study the stability of approximation errors under
the following regularity condition.
\begin{assumption*}
\textup{(C)} There exists $ (\delta,\epsilon )\in[1,\infty)^{2}$
such that
\[
\sup_{n\geq0}\sup_{x,y}\frac{g_{n}(x)}{g_{n}(y)}\leq
\delta,\qquad f(x,\cdot)\leq\epsilon f(y,\cdot),\qquad (x,y)\in\mathsf{X}^{2}.
\]
\end{assumption*}
(C) is a standard hypothesis in studies of non-asymptotic
stability properties of SMC algorithms. Similar conditions have been
adopted in \cite{smc:theory:Dm04}, Chapter~7, and \cite{smc:the:LGO04},
amongst others. (C)~guarantees that $Q_{p,n}$, and related
objects, obey a variety of regularity conditions. In particular, we
immediately obtain
%
\begin{equation}
\sup_{p,n}\sup_{x}\overline{Q}_{p,n}(1)
(x)\leq\sup_{p,n}\sup_{x,y}
\frac
{Q_{p,n}(1)(x)}{Q_{p,n}(1)(y)}\leq\delta\epsilon<+\infty.\label
{eq:Q_p,n_bounded}
\end{equation}
Furthermore, if we introduce the following operators on probability
measures:
%
\begin{eqnarray}
\Phi_{n}\dvtx \mu\in\mathcal{P}&\mapsto&\frac{\mu Q_{n}}{\mu(g_{n-1})}\in
\mathcal{P},\qquad n\geq1,\label{eq:Phi_defn}
\\
\Phi_{p,n}&:=&\Phi_{n}\circ\cdots\circ\Phi_{p+1},\qquad  0
\leq p<n.\label{eq:Phi_defn_semigroup}
\end{eqnarray}
It is well-known that under (C), $\Phi_{p,n}$ is uniformly
exponentially stable, in the sense of the somewhat crude estimate
in the following lemma.

\begin{lem}
\label{lem:Phi_stable}Assume \textup{(C)}. Then there exists a
finite constant $C$ and $\rho\in[0,1)$ such that
\[
\sup_{\mu,\mu^{\prime}\in\mathcal{P}}\bigl\llVert \Phi_{p,n}(\mu)-\Phi
_{p,n}\bigl(\mu^{\prime}\bigr)\bigr\rrVert _{\mathrm{tv}}\leq C
\rho^{n-p}.
\]
\end{lem}

For a proof see, for example, \cite{smc:theory:Dm04}, Proposition~4.3.6.
It follows from (\ref{eq:filtering_recursion}), (\ref{eq:Phi_defn})
and (\ref{eq:Phi_defn_semigroup}) that
\[
\pi_{n+1}=\Phi_{n+1}(\pi_{n})=
\Phi_{p,n+1}(\pi_{p})=\Phi_{0,n+1}(\mu
_{0}),\qquad 0\le p\leq n,
\]
so Lemma~\ref{lem:Phi_stable} can be used to describe the forgetting
of the initial distribution of the non-linear filter. Properties similar
to (\ref{eq:Q_p,n_bounded}) and the exponential stability in Lemma~\ref{lem:Phi_stable} can be obtained under conditions weaker and
more realistic than (C), see, for example, \cite{whiteley2013}
but the developments involved are substantially more technical, lengthy
and complicated to present. Our aim is to expedite the presentation
of stability properties of $\alpha$SMC, and (C) allows
this to be achieved while retaining some of the essence of more realistic
hypotheses on $g_{n}$ and $f$.

The main result of this section is the following theorem, whose proof
we briefly postpone.

\begin{thmm}
\label{thmm:L_R_mix}Assume \textup{(A2)}, \textup{(B$^{++}$)} and
\textup{(C)}. Then there exist finite constants, $c_{1}$ and for
each $r\geq1$, $c_{2}(r)$, such that for any $\tau\in(0,1]$, $N\geq1$,
and $\varphi\in\mathcal{L}$,
\begin{equation}
\inf_{n\geq0}\mathcal{E}_{n}^{N}\geq\tau \quad\Rightarrow\quad
\cases{
 \displaystyle \sup_{n\geq1} \mathbb{E}\biggl[\biggl(\frac{Z_{n}^{N}}{Z_{n}}
\biggr)^{2}\biggr]^{1/n} \leq 1+\frac{c_{1}}{N\tau}\quad\mathrm{and}\vspace*{2pt}\cr
  \sup_{n\geq0} \mathbb{E}\bigl[\bigl|\pi_{n}^{N}(\varphi)-\pi
_{n}(\varphi)\bigr|^{r}\bigr]^{1/r} \leq \Vert\varphi
\Vert\displaystyle\frac{c_{2}(r)}{\sqrt{N\tau}}.}
\label{eq:them_Stability_Statement}
\end{equation}
\end{thmm}

\begin{rem}
\label{rem:linear_variance}It follows quite immediately from the
first inequality of (\ref{eq:them_Stability_Statement}) that
\begin{eqnarray*}
\left.\begin{array} {r} \displaystyle\inf_{n\geq0}\mathcal{E}_{n}^{N}
\geq\tau
\\
\mbox{and}
\\
N\tau\geq nc_{1} \end{array} %
\right\} \quad \Rightarrow \quad\mathbb{E}
\biggl[ \biggl(\frac
{Z_{n}^{N}}{Z_{n}}-1 \biggr)^{2} \biggr]\leq
\frac{2nc_{1}}{N\tau},
\end{eqnarray*}
see Lemma~\ref{lem:var_bound} in the \hyperref[sec:Appendix]{Appendix}.
\end{rem}

\begin{rem}
It follows immediately from the second inequality in (\ref
{eq:them_Stability_Statement})
that when $\inf_{n\geq0}\mathcal{E}_{n}^{N}\geq\tau$ for all $N\geq1$,
the prediction filter errors are time-uniformly convergent in the
sense
\[
\lim_{N\rightarrow\infty}\sup_{n\geq0}\mathbb{E} \bigl[\bigl
\llvert \pi _{n}^{N}(\varphi)-\pi_{n}(\varphi)
\bigr\rrvert ^{r} \bigr]^{1/r}=0.
\]
\end{rem}

\begin{rem}
Further to the discussion of Section~\ref{sub:Instances-of-SMC},
in the case of the BPF we have $\mathcal{E}_{n}^{N}=1$ and hence
$\inf_{n\geq0}\mathcal{E}_{n}^{N}\geq\tau$ always, and for the ARPF
we also have $\inf_{n\geq0}\mathcal{E}_{n}^{N}\geq\tau$ always, by
virtue of the ESS rule for selection of $\alpha_{n}$. In Section~\ref{sec:Discussion},
we shall introduce new algorithms designed to guarantee $\inf_{n\geq
0}\mathcal{E}_{n}^{N}\geq\tau$.
\end{rem}

\begin{rem}
It is possible to deduce estimates for the constants $c_{1}$ and
$c_{2}(r)$ using the statements and proofs of Propositions \ref
{prop:norm_const_bound}
and \ref{prop:L_p_bound_mixing}, which are the main ingredients
in the proof of Theorem~\ref{thmm:L_R_mix}. We omit such expressions
only for simplicity of presentation.
\end{rem}

The proofs of Propositions \ref{prop:norm_const_bound} and \ref
{prop:L_p_bound_mixing}
are given in the \hyperref[sec:Appendix]{Appendix}.

\begin{prop}
\label{prop:norm_const_bound}Assume \textup{(A2)}, \textup{(B$^{++}$)} and
\textup{(C)}. If
for some sequence of constants $ \{ \tau_{n};n\geq0 \} \in
(0,1]^{\mathbb{N}}$
and $N\geq1$,
\[
\mathcal{E}_{n}^{N}\geq\tau_{n},
\]
then for any $n\geq1$,
\[
\mathbb{E} \biggl[ \biggl(\frac{Z_{n}^{N}}{Z_{n}}-1 \biggr)^{2} \biggr]
\leq \sum_{p=0}^{n-1}\frac{\mathrm{osc} (\overline{Q}_{p,n}(1)
)^{2}}{N\tau_{p}}
\biggl(\mathbb{E} \biggl[ \biggl(\frac
{Z_{p}^{N}}{Z_{p}}-1 \biggr)^{2}
\biggr]+1 \biggr).
\]
\end{prop}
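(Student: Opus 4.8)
The plan is to exploit the martingale error decomposition of Proposition~\ref{prop:martingale} applied with $\varphi=1$, relate $\overline{\Gamma}_{n}^{N}(1)$ to $Z_{n}^{N}/Z_{n}$, and then bound the resulting quadratic-variation term by the ESS coefficient $\mathcal{E}_{p}^{N}$. First I would observe that, since $\beta_{n,n}^{i}=N^{-1}$ and hence $\Gamma_{n}^{N}=N^{-1}\sum_i W_n^i\delta_{\zeta_n^i}$, we have $\overline{\Gamma}_{n}^{N}(1)=Z_n^N/Z_n$, so by (\ref{eq:Gamma_telescope})--(\ref{eq:xi_cond_exp}) with $\varphi=1$ the quantity $\sqrt{N}(Z_n^N/Z_n-1)$ is a sum of martingale increments $\xi_k^N$. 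Squaring and using orthogonality of the increments (which follows from (\ref{eq:xi_cond_exp}), the increments at time-block $p$ being mean-zero given $\mathcal{F}_{p-1}$, and increments in distinct blocks being handled by the tower property), one gets
\begin{equation*}
\mathbb{E}\left[\left(\frac{Z_n^N}{Z_n}-1\right)^2\right]=\frac{1}{N}\sum_{k=1}^{(n+1)N}\mathbb{E}\left[(\xi_k^N)^2\right]=\sum_{p=0}^{n}\mathbb{E}\left[\sum_i\left(\beta_{p,n}^i\overline{W}_p^i\Delta_{p,n}^i\right)^2\right].
\end{equation*}
Under $\mathbf{(B^{++})}$ we have $\beta_{p,n}^i=N^{-1}$ for all $i$, so the inner sum becomes $N^{-2}\sum_i(\overline{W}_p^i)^2(\Delta_{p,n}^i)^2$, and since each $\Delta_{p,n}^i$ with $\varphi=1$ is a difference of two terms each bounded in absolute value by $\sup_x\overline{Q}_{p,n}(1)(x)$, we can replace $(\Delta_{p,n}^i)^2$ by $\mathrm{osc}(\overline{Q}_{p,n}(1))^2$ at the cost of an inequality.

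Next I would handle the $p=n$ term separately: there $\overline{Q}_{n,n}(1)=1$ so $\mathrm{osc}=0$ and that term vanishes, leaving a sum over $0\le p\le n-1$. The key remaining step is to connect $N^{-2}\sum_i(\overline{W}_p^i)^2$ with the ESS coefficient. By definition (\ref{eq:ESS_defn_front}), $\mathcal{E}_{p}^{N}=(N^{-1}\sum_i W_p^i)^2/(N^{-1}\sum_i(W_p^i)^2)$, while $\overline{W}_p^i=W_p^i/\gamma_p(1)$ and $Z_p^N/Z_p=N^{-1}\sum_i W_p^i/\gamma_p(1)$; hence
\begin{equation*}
N^{-2}\sum_i(\overline{W}_p^i)^2=\frac{N^{-1}\sum_i(W_p^i)^2}{N\gamma_p(1)^2}=\frac{1}{N}\cdot\frac{(N^{-1}\sum_i W_p^i)^2}{\gamma_p(1)^2\,\mathcal{E}_p^N}=\frac{1}{N\mathcal{E}_p^N}\left(\frac{Z_p^N}{Z_p}\right)^2.
\end{equation*}
Taking expectations, using the hypothesis $\mathcal{E}_p^N\ge\tau_p$ (so $1/\mathcal{E}_p^N\le1/\tau_p$; note $\mathcal{E}_p^N$ is $\mathcal{F}_{p-1}$-measurable so the substitution is legitimate pointwise before taking $\mathbb{E}$), and writing $(Z_p^N/Z_p)^2=(Z_p^N/Z_p-1)^2+2(Z_p^N/Z_p-1)+1$ together with the unbiasedness $\mathbb{E}[Z_p^N/Z_p]=1$ from Theorem~\ref{thm:convergence}(1) — which applies since $\mathbf{(B^{++})}\Rightarrow\mathbf{(B)}$ — gives $\mathbb{E}[(Z_p^N/Z_p)^2]=\mathbb{E}[(Z_p^N/Z_p-1)^2]+1$, yielding exactly the claimed bound.

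The main obstacle I anticipate is justifying the interchange of squaring, summation, and conditional expectation cleanly, i.e.\ establishing the orthogonality of the $\xi_k^N$ across both the inner index $i$ and the time index $p$ from the bi-level filtration $\mathcal{F}^{(k)}$ in Proposition~\ref{prop:martingale}: one must use that for $k$ in block $p$, $\mathbb{E}[\xi_k^N\mid\mathcal{F}^{(k-1)}]=0$ and $\mathcal{F}_{p-1}\subseteq\mathcal{F}^{(k-1)}$, and that cross terms $\mathbb{E}[\xi_k^N\xi_\ell^N]$ with $k<\ell$ vanish by conditioning on $\mathcal{F}^{(\ell-1)}\supseteq\sigma(\xi_k^N)$. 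Once that bookkeeping is in place, the rest is the algebraic identity above plus the crude bounds from $\mathbf{(C)}$ via (\ref{eq:Q_p,n_bounded}); the appearance of $\mathcal{E}_p^N$ is the conceptual heart, and it drops out of (\ref{eq:ESS_defn_front}) essentially by inspection once $\overline{W}_p^i$ is written in terms of $W_p^i$ and $\gamma_p(1)$.
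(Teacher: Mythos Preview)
Your proposal is correct and follows essentially the same route as the paper's proof: apply the martingale decomposition of Proposition~\ref{prop:martingale} with $\varphi=1$, use orthogonality of the increments to obtain the exact second-moment identity, invoke $\mathbf{(B^{++})}$ to set $\beta_{p,n}^i=N^{-1}$, bound $(\Delta_{p,n}^i)^2$ by $\mathrm{osc}(\overline{Q}_{p,n}(1))^2$ (dropping the $p=n$ term since $\mathrm{osc}(1)=0$), then rewrite $N^{-2}\sum_i(\overline{W}_p^i)^2$ via the ESS identity and finish with the pointwise bound $\mathcal{E}_p^N\geq\tau_p$ together with unbiasedness $\mathbb{E}[Z_p^N]=Z_p$. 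The paper's proof is terser on the orthogonality step (it simply cites (\ref{eq:Gamma_telescope})--(\ref{eq:xi_cond_exp})), but the argument is identical.
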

%

\begin{prop}
\label{prop:L_p_bound_mixing}Consider the constants and Markov kernels:
\[
\delta_{p,n}:=\sup_{x,y}\frac{Q_{p,n}(1)(x)}{Q_{p,n}(1)(y)},\qquad
P_{p,n}(x,A):=\frac{Q_{p,n}(\mathbb{I}_{A})(x)}{Q_{p,n}(1)(x)},\qquad x\in\mathsf{X},A\in\mathcal{X}, 0\leq
p\leq n.
\]
Assume \textup{(A2)}, \textup{(B)} and \textup{(C)}.
Then for any $r\geq1$ there exists a finite constant
$B(r)$ such that for any $N\geq1$, $n\geq0$, and $\varphi\in\mathcal
{L}$,
%
\begin{equation}
\mathbb{E} \bigl[\bigl\llvert \pi_{n}^{N}(\varphi)-
\pi_{n}(\varphi)\bigr\rrvert ^{r} \bigr]^{1/r}
\leq4B(r)^{1/r}\sum_{p=0}^{n}
\delta_{p,n}\bigl\llVert P_{p,n}(\bar{\varphi})\bigr\rrVert
\mathbb{E} \bigl[\bigl\llvert \mathcal {C}_{p,n}^{N}\bigr
\rrvert ^{r} \bigr]^{1/r}.\label{eq:L_p_bound_mixing_statement}
\end{equation}
where $\bar{\varphi}:=\varphi-\pi_{n}(\varphi)$ and
\[
\mathcal{C}_{p,n}^{N}:=\frac{\sqrt{\sum_{i} (\beta
_{p,n}^{i}W_{p}^{i} )^{2}}}{\sum_{i}\beta_{p,n}^{i}W_{p}^{i}}.
\]
\end{prop}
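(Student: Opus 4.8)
The plan is to control $\mathbb{E}[|\pi_n^N(\varphi)-\pi_n(\varphi)|^r]^{1/r}$ by comparing $\pi_n^N(\varphi)$ to the $\overline{\Gamma}^N$-type measures and exploiting the contraction estimate \eqref{eq:Phi_stability}. First I would recall, as in \eqref{eq:convergence_sketch_id}, that $\pi_n^N(\varphi)=\Gamma_n^N(\bar\varphi)/\Gamma_n^N(1)$ where $\bar\varphi:=\varphi-\pi_n(\varphi)$, so that $\pi_n^N(\varphi)-\pi_n(\varphi)=\overline{\Gamma}_n^N(\bar\varphi)/\overline{\Gamma}_n^N(1)$. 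Taking $\Gamma_n^N(1)$ in the denominator forces us to work with a ratio, so the first step is to write $\pi_n^N(\varphi)-\pi_n(\varphi)$ as a telescoping sum over $p=0,\dots,n$ of terms comparing $\overline{\Gamma}_{p,n}^N$ and $\overline{\Gamma}_{p+1,n}^N$ applied to suitably normalised test functions, together with the standard trick of dividing by $\overline{\Gamma}_n^N(1)$ and bounding that quantity (which has mean $1$ and is nonnegative) — concretely, one shows $\pi_n^N(\varphi)-\pi_n(\varphi)$ equals a sum of terms of the form $[\overline{\Gamma}_{p,n}^N - \text{(conditional mean)}](P_{p,n}(\bar\varphi))$ divided by $\pi_p^N Q_{p,n}(1)/\gamma_p(1)\times$ normalising factors. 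This is the type of decomposition used in \citep{smc:theory:Dm04}; here the relevant "local error" at level $p$ is exactly a centred version of $\overline{\Gamma}_{p,n}^N$ applied to the kernel-normalised function $P_{p,n}(\bar\varphi)$, whose oscillation is bounded through $\delta_{p,n}$ and $\|P_{p,n}(\bar\varphi)\|$.

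Second, for each fixed $p$ I would invoke a Burkholder/Marcinkiewicz--Zygmund-type martingale inequality — the same universal constant $B(r)$ already appearing in Proposition~\ref{prop:martingale} — applied to the sum over $i\in[N]$ of the centred increments $\beta_{p,n}^i \overline{W}_p^i (\,\overline{Q}\text{-type function}(\zeta_p^i) - \text{conditional mean})$. Because of $\mathbf{(B)}$, the $\beta_{p,n}^i$ are deterministic constants, so the conditional-independence structure of $\{\zeta_p^i\}_{i\in[N]}$ given $\mathcal{F}_{p-1}$ (and the martingale-difference property established exactly as for $\Delta_{p,n}^i$ in Section~3.1) makes this a genuine sum of conditionally independent, mean-zero terms, and the Burkholder bound yields control by $\mathbb{E}[(\sum_i (\beta_{p,n}^i \overline{W}_p^i)^2 \,\mathrm{osc}(\cdot)^2)^{r/2}]^{1/r}$. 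Bounding the oscillation factor by $\mathrm{osc}(P_{p,n}(\bar\varphi))\le 2\|P_{p,n}(\bar\varphi)\|$ and pulling it out, and pulling a factor $\delta_{p,n}$ out of the normalisation (using $\overline{Q}_{p,n}(1)(x)/\pi_p Q_{p,n}(1) $-type ratios bounded by $\delta_{p,n}$, cf.\ \eqref{eq:Q_p,n_bounded}), one is left with $\mathbb{E}[|\mathcal{C}_{p,n}^N|^r]^{1/r}$ after recognising that $\sum_i (\beta_{p,n}^i W_p^i)^2 / (\sum_i \beta_{p,n}^i W_p^i)^2 = (\mathcal{C}_{p,n}^N)^2$ and $\overline{W}_p^i = W_p^i/\gamma_p(1)$ so the $\gamma_p(1)$ factors cancel in the ratio. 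The constant $4$ in \eqref{eq:L_p_bound_mixing_statement} absorbs the factor $2$ from $\mathrm{osc}\le 2\|\cdot\|$ together with a further factor $2$ from handling the ratio $\overline{\Gamma}_n^N(\bar\varphi)/\overline{\Gamma}_n^N(1)$ (splitting it as $\overline{\Gamma}_n^N(\bar\varphi) + \pi_n(\varphi)(1-\overline{\Gamma}_n^N(1))$ over $\overline{\Gamma}_n^N(1)$ is not available since $\overline{\Gamma}_n^N(1)$ need not be bounded below pathwise; instead one uses the telescoping identity that avoids dividing by the random total mass at the final stage, at the cost of the extra $2$).

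The main obstacle I anticipate is getting the decomposition of $\pi_n^N(\varphi)-\pi_n(\varphi)$ into local errors that are genuinely centred martingale differences \emph{and} whose normalising denominators can be bounded deterministically (or at least $\omega$-wise) rather than merely in expectation. The subtlety is that the natural telescoping of $\overline{\Gamma}_{p,n}^N$ across $p$ produces ratios whose denominators involve $\sum_j \alpha_{p-1}^{ij} W_{p-1}^j \overline{Q}_p(1)(\zeta_{p-1}^j)$-type quantities, and one must verify that after passing to the kernel $P_{p,n}$ and extracting $\delta_{p,n}$ these denominators are controlled uniformly — this is where assumption $\mathbf{(C)}$ (via \eqref{eq:Q_p,n_bounded}) does the essential work, and where one must be careful not to accidentally require a lower bound on $\overline{\Gamma}_n^N(1)$ that is unavailable without $\mathbf{(B^{++})}$ (note this proposition assumes only $\mathbf{(B)}$). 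Once the decomposition is set up correctly, the martingale inequality and the oscillation/normalisation bounds are routine, and summing over $p=0,\dots,n$ gives \eqref{eq:L_p_bound_mixing_statement}.
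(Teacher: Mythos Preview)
Your proposal is essentially correct and follows the same route as the paper's proof: telescope $\pi_n^N(\varphi)-\pi_n(\varphi)$ via the ratios $D_{p,n}^N:=\Gamma_{p,n}^N Q_{p,n}(\bar\varphi)/\Gamma_{p,n}^N Q_{p,n}(1)$, split each increment $D_{p,n}^N-D_{p-1,n}^N$ into a numerator-error term and a denominator-error term (the paper calls these $T_{p,n}^{(N,1)}$ and $T_{p,n}^{(N,2)}$), apply the same conditional Burkholder/Marcinkiewicz--Zygmund bound as in Proposition~\ref{prop:martingale} to each, and use $\mathrm{osc}(Q_{p,n}(\phi))/\inf_x Q_{p,n}(1)(x)\le 2\delta_{p,n}\|P_{p,n}(\phi)\|$ together with $\|P_{p-1,n}(\bar\varphi)\|\le\|P_{p,n}(\bar\varphi)\|$ to arrive at the stated bound.

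One small point of clarification: your remark that the telescoping ``avoids dividing by the random total mass'' is not quite the mechanism. The paper \emph{does} divide by the random quantity $\Gamma_{p,n}^N Q_{p,n}(1)$ at every level $p$; what makes this harmless is that $\Gamma_{p,n}^N(1)=\sum_i\beta_{p,n}^i W_p^i$ is $\mathcal{F}_{p-1}$-measurable (Corollary~\ref{cor:measurability} plus $\mathbf{(B)}$), so after bounding $\Gamma_{p,n}^N Q_{p,n}(1)\ge \Gamma_{p,n}^N(1)\inf_x Q_{p,n}(1)(x)$ one can pull $\Gamma_{p,n}^N(1)$ outside the conditional expectation and it becomes exactly the denominator of $\mathcal{C}_{p,n}^N$. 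This is the precise resolution of the obstacle you flag in your last paragraph, and it is what makes only $\mathbf{(B)}$ (not $\mathbf{(B^{++})}$) needed here. The factor $4$ is then simply $2+2$ from the two $T$-terms, each carrying the $2$ from the $\mathrm{osc}/\inf$ estimate.
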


\begin{pf*}{Proof of Theorem~\ref{thmm:L_R_mix}} For the first bound on the
right of (\ref{eq:them_Stability_Statement}) under the conditions
of the theorem, we apply Proposition~\ref{prop:norm_const_bound}
to give the following recursive bound:
%
\begin{equation}
v_{n}\leq\sum_{p=0}^{n-1}
\frac{C}{N\tau} (v_{p}+1 ),\label
{eq:v_n_recursion}
\end{equation}
where $v_{n}:=\mathbb{E} [ (Z_{n}^{N}/Z_{n}-1 )^{2} ]$
and
\[
C:=\sup_{p,n}\operatorname{osc} \bigl(\overline{Q}_{p,n}(1)
\bigr)^{2}\leq4\sup_{p,n}\bigl\llVert
\overline{Q}_{p,n}(1)\bigr\rrVert ^{2}<+\infty,
\]
under (C); see (\ref{eq:Q_p,n_bounded}). We shall now
prove
%
\begin{equation}
v_{n}\leq \biggl(1+\frac{C}{N\tau} \biggr)^{n}-1\qquad
\forall n\geq0,\label
{eq:v_n_induction_hyp}
\end{equation}
which holds trivially if $C=0$, since in that case $v_{n}=0$ by
(\ref{eq:v_n_recursion}). Therefore suppose $C>0$. The argument
is inductive. To initialize, note that since by definition $Z_{0}^{N}=Z_{0}=1$,
we have $v_{0}=0$. Now assume (\ref{eq:v_n_induction_hyp}) holds
at all ranks strictly less than some fixed $n\geq1$. Using (\ref
{eq:v_n_recursion}),
we then have at rank $n$,
\begin{eqnarray*}
v_{n} & \leq& \frac{C}{N\tau}\sum_{p=0}^{n-1}
(v_{p}+1 )
\leq \frac{C}{N\tau}\sum_{p=0}^{n-1}
\biggl(1+\frac{C}{N\tau} \biggr)^{p}
\\
& = & \frac{C}{N\tau}\frac{ (1+{C}/{N\tau} )^{n}-1}{
(1+{C}/{N\tau} )-1}
\\
& = & \biggl(1+\frac{C}{N\tau} \biggr)^{n}-1.
\end{eqnarray*}
This completes the proof of (\ref{eq:v_n_induction_hyp}), from which
the first inequality on the right of (\ref{eq:them_Stability_Statement})
follows immediately upon noting that by Theorem~\ref{thmm:convergence},
$\mathbb{E}[Z_{n}^{N}]=Z_{n}$.

For the second bound on the right of (\ref{eq:them_Stability_Statement}),
first note that by Lemma~\ref{lem:Phi_stable}, under (C) we have
\begin{eqnarray*}
\bigl\llVert P_{p,n}(\bar{\varphi})\bigr\rrVert & = & \sup
_{x}\bigl\llvert P_{p,n}(\varphi) (x)-
\pi_{n}(\varphi)\bigr\rrvert
\\
& = & \sup_{x}\bigl\llvert \Phi_{p,n}(
\delta_{x}) (\varphi)-\Phi_{p,n}(\pi _{p}) (
\varphi)\bigr\rrvert
\\
& \leq& \sup_{\mu,\nu\in\mathcal{P}}\bigl\llVert \Phi_{p,n}(\mu)-\Phi
_{p,n}(\nu)\bigr\rrVert _{\mathrm{tv}}\llVert \varphi\rrVert \leq
\llVert \varphi\rrVert C\rho^{n-p},
\end{eqnarray*}
and by (\ref{eq:Q_p,n_bounded}),
\[
\sup_{n\geq0}\sup_{p\leq n} \delta_{p,n}<+
\infty.
\]
Using these upper bounds, the fact that under (B$^{++}$)
we have $\beta_{p,n}^{i}=1/N$, and Proposition~\ref{prop:L_p_bound_mixing},
we find that there exists a finite constant $\widetilde{B}(r)$ such
that for any $N\geq1$, $n\geq0$, $\varphi\in\mathcal{L}$,
\[
\mathbb{E} \bigl[\bigl\llvert \pi_{n}^{N}(\varphi)-
\pi_{n}(\varphi)\bigr\rrvert ^{r} \bigr]^{1/r}\leq
\llVert \varphi\rrVert \frac{\widetilde
{B}(r)}{\sqrt{N}}\sum_{p=0}^{n}
\rho^{n-p}\mathbb{E} \bigl[\bigl\llvert \mathcal {E}_{p}^{N}
\bigr\rrvert ^{-r/2} \bigr]^{1/r},
\]
where
\[
\mathcal{E}_{n}^{N}=\frac{ (N^{-1}\sum_{i}W_{n}^{i}
)^{2}}{N^{-1}\sum_{i} (W_{n}^{i} )^{2}}.\vspace*{-10pt}
\]
\end{pf*}

\section{Discussion}
\label{sec:Discussion}

\subsection{Why not just run independent particle filters and
average?}
\label{sub:Why-not-just}

One obvious approach to parallelization of SMC is to run a number
of independent copies of a standard algorithm, such as the BPF, and
then in some sense simply average their outputs. Let us explain possible
shortcomings of this approach.

Suppose we want to run $s\geq1$ independent copies of Algorithm \ref
{alg:boot_pf},
each with $q\geq1$ particles. For purposes of exposition, it is helpful
to express this collection of independent algorithms as a particular
instance of $\alpha$SMC: for the remainder of Section~\ref{sub:Why-not-just},
we set $N=sq$ and consider Algorithm \ref{alg:aSMC} with $\mathbb{A}_{N}$
chosen to consist only of the block diagonal matrix:
%
\begin{equation}
\left[ %
\matrix{ \mathbf{q^{-1}} &
\mathbf{0} & \cdots& \mathbf{0}
\vspace*{2pt}\cr
\mathbf{0} & \mathbf{q^{-1}} & \cdots& \mathbf{0}
\vspace*{2pt}\cr
\vdots& \vdots& \ddots& \vdots
\vspace*{2pt}\cr
\mathbf{0} & \mathbf{0} & \cdots& \mathbf{q^{-1}} }\right],\label{eq:alpha_block}
\end{equation}
where $\mathbf{q}^{-1}$ is a $q\times q$ submatrix with every entry
equal to $q^{-1}$ and $\mathbf{0}$ is a submatrix of zeros, of the
same size. In this situation, a simple application of Lemma~\ref
{lem:W_n_representation}
shows that for any $n\geq1$ and $\ell\in[s]$, if we define $B(\ell):=\{
(\ell-1)q+1,(\ell-1)q+2,\ldots,\ell q\}$,
then
%
\begin{equation}
\mbox{for all } i_{n}\in B(\ell),\qquad W_{n}^{i_{n}}=
\prod_{p=0}^{n-1} \biggl(N^{-1}\sum
_{i_{p}\in B(\ell)}g_{p} \bigl(\zeta
_{p}^{i_{p}} \bigr) \biggr)=:\mathbb{W}_{n}^{\ell},\label{eq:W_n^i_blocks}
\end{equation}
cf. (\ref{eq:bootstrap_W_n^i})--(\ref{eq:bootstrap_Z_n^N}), and
furthermore upon inspection of Algorithm \ref{alg:aSMC}, we find
%
\begin{equation}
\mbox{for all }\ell\in[s]\mbox{ and }i\in B(\ell)\qquad\mathbb {P} \bigl(
\zeta_{n}^{i}\in A|\mathcal{F}_{n-1} \bigr)=
\frac
{\sum_{j\in B(\ell)}g_{n-1}(\zeta_{n-1}^{j})f(\zeta_{n-1}^{j},A)}{\sum_{j\in B(\ell)}g_{n-1} (\zeta_{n-1}^{j} )},\label{eq:blocks_law}
\end{equation}
for any $A\in\mathcal{X}$. It follows that the blocks of particles
\[
\hat{\zeta}_{n}^{k}:= \bigl\{ \zeta_{n}^{i}
\bigr\} _{i\in B(\ell)},\qquad \ell\in[s],
\]
are independent, and for each $\ell\in[s]$, the sequence $ \{ \hat
{\zeta}_{n}^{\ell};n\geq0 \} $
evolves under the same law as a BPF, with $q$ particles. Furthermore,
we notice
\[
\pi_{n}^{N}=\pi_{n}^{sq}=
\frac{\sum_{i}W_{n}^{i} \delta_{\zeta
_{n}^{i}}}{\sum_{i}W_{n}^{i}}=\frac{\sum_{\ell\in[s]}\sum_{i\in B(\ell
)}W_{n}^{i} \delta_{\zeta_{n}^{i}}}{\sum_{\ell\in[s]}\sum_{i\in B(\ell
)}W_{n}^{i}}=\frac{\sum_{\ell\in[s]}\mathbb{W}_{n}^{\ell}
(q^{-1}\sum_{i\in B(\ell)}\delta_{\zeta_{n}^{i}} )}{\sum_{\ell\in
[s]}\mathbb{W}_{n}^{\ell}},
\]
where $q^{-1}\sum_{i\in B(\ell)}\delta_{\zeta_{n}^{i}}$ may be regarded
as the approximation of $\pi_{n}$ obtained from the $\ell$th block
of particles. Since we have assumed that $\mathbb{A}_{N}$ consists
only of the matrix (\ref{eq:alpha_block}), \textup{(A2)} and (B$^{++}$)
hold, and by Theorem~\ref{thmm:convergence} we are assured of the
a.s. convergence $\pi_{n}^{sq}(\varphi)\rightarrow\pi_{n}(\varphi)$
when $q$ is fixed and $s\rightarrow\infty$. In words, we have convergence
as the total number of bootstrap algorithms tends to infinity, even
though the number of particles within each algorithm is fixed. On
the other hand, simple averaging of the output from the $s$ independent
algorithms would entail reporting:
%
\begin{equation}
\frac{1}{sq}\sum_{i\in[sq]}\delta_{\zeta_{n}^{i}}\label{eq:naive}
\end{equation}
as an approximation of $\pi_{n}$; the problem is that (\ref{eq:naive})
is biased, in the sense that in general it is not true that, with
$q$ fixed, $(sq)^{-1}\sum_{i\in[sq]}\varphi(\zeta_{n}^{i})\rightarrow\pi
_{n}(\varphi)$
as $s\rightarrow\infty$ (although obviously we do have convergence
if $q\rightarrow\infty$). In summary, simple averages across independent
particle filters do not, in general, converge as the number of algorithms
grows.

We can also discuss the quality of an approximation of $Z_{n}$ obtained
by simple averaging across the $s$ independent algorithms; let us
consider the quantities
\[
\mathbb{Z}_{n}^{(q,\ell)}:=\frac{1}{\ell}\sum
_{j\in[\ell]}\mathbb {W}_{n}^{j}, \qquad\ell\in[s].
\]
Comparing (\ref{eq:W_n^i_blocks}) with (\ref{eq:bootstrap_Z_n^N}),
and noting (\ref{eq:blocks_law}) and the independence properties
described above, we have
%
\begin{equation}
\mathbb{E} \bigl[\mathbb{Z}_{n}^{(q,s)} \bigr]=Z_{n},\qquad
\mathbb {E} \biggl[ \biggl(\frac{\mathbb{Z}_{n}^{(q,s)}}{Z_{n}}-1 \biggr)^{2} \biggr]=
\frac{1}{s}\mathbb{E} \biggl[ \biggl(\frac{\mathbb
{Z}_{n}^{(q,1)}}{Z_{n}}-1
\biggr)^{2} \biggr],\label{eq:Z_naive_average}
\end{equation}
where the first equality holds due to the first part of Theorem~\ref
{thmm:convergence}:
in this context the well known lack-of-bias property of the BPF. Under
certain ergodicity and regularity conditions $\mathbb{E} [
(\mathbb{Z}_{n}^{(q,1)}/Z_{n} )^{2} ]$
can grow exponentially fast along observation sample paths when $q$
is fixed and $n\rightarrow\infty$ \cite{WhiteleyTPF}. When that
occurs, it is clear from (\ref{eq:naive}) that $s$ must be scaled
up exponentially fast with $n$ in order to control the relative variance
of $\mathbb{Z}_{n}^{(q,s)}$. On the other hand, by Theorem~\ref{thmm:L_R_mix}
and Remark~\ref{rem:linear_variance}, it is apparent that if we
design an instance of $\alpha$SMC so as to enforce $\inf_{n\geq
0}\mathcal{E}_{n}^{N}>0$,
then we can control $\mathbb{E} [ (Z_{n}^{N}/Z_{n}
)^{2} ]$
at a more modest computational cost. When $\mathbb{A}_{N}$ consists
only of the matrix (\ref{eq:alpha_block}) we do not have a guarantee
that $\inf_{n\geq0}\mathcal{E}_{n}^{N}>0$, but in Section~\ref{sub:Algorithms-with-adaptive}
we shall suggest some novel algorithms which do guarantee this lower
bound and therefore enjoy the time-uniform convergence and linear-in-time
variance properties of Theorem~\ref{thmm:L_R_mix}. Before addressing
these stability issues, we discuss the conditions under which the
$\alpha$SMC
algorithm converges.

\begin{figure}

\includegraphics{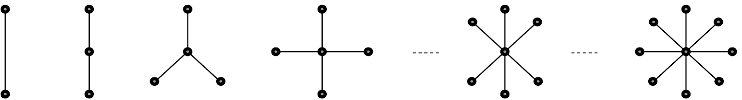}

\caption{Star graphs.} \label{fig:Star-graphs-of}
\end{figure}

\subsection{Ensuring convergence}
\label{sub:Ensuring-convergence}

Throughout Section~\ref{sub:Ensuring-convergence}, we consider the
generic Algorithm \ref{alg:aSMC}. We describe what can go wrong
if the conditions (B$^{+}$) and (B$^{++}$) of
Theorem~\ref{thmm:convergence} do not hold: suppose that $\mathbb{A}_{N}$
consists only of the transition matrix of a simple random walk on
the star graph with $N$ vertices, call it $\mathcal{S}_{N}$. That
is, for $N>2$, $\mathcal{S}_{N}$ is an undirected tree with one
internal vertex and $N-1$ leaves, and for $N\leq2$, all vertices
are leaves. Examples of $\mathcal{S}_{N}$ are illustrated in Figure~\ref{fig:Star-graphs-of}.
It is elementary that a simple random walk on $\mathcal{S}_{N}$ has
unique invariant distribution given by
\[
\frac{d_{N}^{i}}{\sum_{j}d_{N}^{j}},\qquad i\in[N], \mbox{ where } d_{N}^{i}:=
\mbox{ degree of vertex }i\mbox{ in }\mathcal{S}_{N},
\]
so that (B$^{++}$) does not hold for $N>2$. Assuming that
for every $N>2$, the internal vertex of $\mathcal{S}_{N}$ is labelled
vertex $1$, then elementary calculations show that
\[
\max_{i\in[N]}\beta_{p,n}^{i}=
\beta_{p,n}^{1}= %
\cases{ \displaystyle\frac{N-1}{N}, &\quad
$\mbox{if $(n-p)$ is even}$,\vspace*{2pt}
\cr
\displaystyle\frac{1}{N}, &\quad  $\mbox{if $(n-p)$ is odd}$,}
\]
so (B$^{+}$) also does not hold, and thus part (2) of
Theorem~\ref{thmm:convergence}
does not hold.

As a more explicit example of convergence failure, suppose that $\mathbb{A}_{N}$
consists only of the matrix which has $1$ for every entry in its
first column, and zeros for all other entries. This is the transition
matrix of a random walk on a directed graph of which all edges lead
to vertex $1$. It follows that for all $0\leq p<n$, we have $\beta_{p,n}^{1}=1$
and $\beta_{p,n}^{i}=0$ for all $i\in[N]\setminus\{1\}$, so (B$^{+}$)
clearly does not hold. If additionally $f(x,\cdot)=\delta_{x}(\cdot)$,
then by inspection of Algorithm \ref{alg:aSMC} we have $\mathbb
{P} ( \{ \zeta_{n}^{i}=\zeta_{0}^{1} \}  )=1$
for all $i\in[N]$ and all $n\geq1$. We then also have $\mathbb{P}
( \{ \pi_{n}^{N}=\delta_{\zeta_{0}^{1}} \}  )=1$,
so that we obtain a generally poor and non-convergent approximation
of $\pi_{n}$.

In both these situations vertex $1$ is, in graph theoretic terms,
a \emph{hub} and an intuitive explanation of the convergence failure
is that the contribution of particle $1$ to $\pi_{n}^{N}$ does not
become negligible as $N\rightarrow\infty$, so that no ``averaging''
takes place. Assumption (B$^{+}$) ensures enough negligibility
to prove the weak laws of large numbers in Theorem~\ref{thmm:convergence}.
Assumption (B$^{++}$) may be viewed as ensuring negligibility,
and in such a way as to ensure the $\sqrt{N}$ rate of convergence
and strong law in the final part of Theorem~\ref{thmm:convergence}.
As a practical summary, we recommend verifying (B$^{++}$),
or at least avoid using graphs with hubs, since otherwise
$\alpha$SMC may not converge.

\subsection{Provably stable algorithms with adaptive interaction}
\label{sub:Algorithms-with-adaptive}

There are of course many choices of $\mathbb{A}_{N}$ which do satisfy
(B$^{++}$). In this section, we provide some guidance and
suggestions on this matter. In order to focus our attention, we consider
in addition to (B$^{++}$), the following criteria against
which to assess candidates for $\mathbb{A}_{N}$ and whatever functional
is used at line $(\star)$ of Algorithm \ref{alg:aSMC}:
\begin{longlist}[(a)]

\item[(a)]
the condition $\inf_{n\geq0}\mathcal{E}_{n}^{N}>0$
should be enforced, so as to ensure stability,

\item[(b)]
the computational complexity of associated
sampling, weight and ESS calculations should not be prohibitively
high.
\end{longlist}
The motivation for (a) is the
theoretical assurance given by Theorem~\ref{thmm:L_R_mix}. The motivation
for (b) is simply that we do not want an algorithm
which is much more expensive than any of the standard SMC methods,
Algorithms \ref{alg:SIS}--\ref{alg:boot_pf} and the ARPF. It is
easily checked that the complexity of SIS is $\mathrm{O}(N)$ per unit time
step, which is the same as the complexity of the BPF \cite
{carpenter1999improved}
and the ARPF.

Throughout the remainder of Section~\ref{sub:Algorithms-with-adaptive},
we shall assume that $\mathbb{A}_{N}$ consists only of transition
matrices of simple random walks on regular undirected graphs. We impose
a little structure in addition to this as per the following definition,
which identifies an object related to the standard notion of a block-diagonal
matrix.
\begin{defn*}
A \emph{B-matrix} is a Markov transition matrix which specifies
a simple random walk on a regular undirected graph which has a self-loop
at every vertex and whose connected components are all complete subgraphs.
\end{defn*}
Note that due to the graph regularity appearing in this definition,
if $\mathbb{A}_{N}$ consists only of B-matrices, then (B$^{++}$)
is immediately satisfied. This regularity is also convenient for purposes
of interpretation: it seems natural to use graph degree to give a
precise meaning to ``degree of interaction''. Indeed $\mathrm{Id}$ and $\mathbf{1}_{1/N}$
are both B-matrices, respectively, specifying simple random walks on
$1$-regular and $N$-regular graphs, and recall for the ARPF, $\mathbb
{A}_{N}= \{ \mathrm{Id},\mathbf{1}_{1/N} \} $;
the main idea behind the new algorithms below is to consider an instance
of $\alpha$SMC in which $\mathbb{A}_{N}$ is defined to consist of
B-matrices of various degrees $d\in[N]$, and define adaptive algorithms
which select the value of $\alpha_{n-1}$ by searching through $\mathbb{A}_{N}$
to find the graph with the smallest $d$ which achieves $\mathcal
{E}_{n}^{N}\geq\tau>0$
and hence satisfies criterion (a). In this way, we ensure provable
stability whilst trying to avoid the complete interaction which occurs
when $\alpha_{n-1}=\mathbf{1}_{1/N}$.

Another appealing property of B-matrices is formalized in the following
lemma; see criterion~(b) above. The proof is given
in the \hyperref[sec:Appendix]{Appendix}.

\begin{lem}
\label{lem:complexity}Suppose that $A= (A^{ij} )$ is a
B-matrix of size $N$. Then given the quantities $ \{
W_{n-1}^{i} \} _{i\in[N]}$
and $ \{ g_{n-1}(\zeta_{n-1}^{i}) \} _{i\in[N]}$, the computational
complexity of calculating $ \{ W_{n}^{i} \} _{i\in[N]}$
and simulating $ \{ \zeta_{n}^{i} \} _{i\in[N]}$ as per
Algorithm \ref{alg:aSMC}, using $\alpha_{n-1}=A$, is $\mathrm{O}(N)$.
\end{lem}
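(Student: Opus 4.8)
The plan is to exploit the block-diagonal structure forced by the definition of a B-matrix. First I would observe that, by graph regularity together with the requirement that every connected component be a complete subgraph carrying a self-loop at each vertex, such a graph is a disjoint union of $m := N/d$ copies of the complete graph on $d$ vertices with self-loops, where $d\in[N]$ is the common degree; consequently, after relabelling, $A$ is block diagonal with $m$ blocks, each block equal to the $d\times d$ matrix $\mathbf{1}_{1/d}$. Writing $B(1),\ldots,B(m)$ for the index sets of the blocks, the steps of Algorithm~\ref{alg:aSMC} with $\alpha_{n-1}=A$ then decouple across blocks: for every $\ell\in[m]$ and every $i\in B(\ell)$,
\[
W_n^i=\frac{1}{d}\sum_{j\in B(\ell)}W_{n-1}^j g_{n-1}(\zeta_{n-1}^j)=:\mathbb{W}_n^\ell,
\]
and the conditional law of $\zeta_n^i$ given $\mathcal{F}_{n-1}$ is the mixture $\sum_{j\in B(\ell)}p_\ell^j\, f(\zeta_{n-1}^j,\cdot)$ with weights $p_\ell^j\propto W_{n-1}^j g_{n-1}(\zeta_{n-1}^j)$, $j\in B(\ell)$, which do not depend on $i$.

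Next I would bound the cost of the weight computation. Given $\{W_{n-1}^i\}_{i\in[N]}$ and $\{g_{n-1}(\zeta_{n-1}^i)\}_{i\in[N]}$, forming the products $W_{n-1}^i g_{n-1}(\zeta_{n-1}^i)$ costs $O(N)$ in total; then computing $\mathbb{W}_n^\ell$ for a fixed $\ell$ is a sum of $|B(\ell)|=d$ of these numbers, hence $O(d)$, and since the blocks partition $[N]$ the total is $O\big(\sum_\ell d\big)=O(N)$, after which each $W_n^i$ is read off from $\mathbb{W}_n^{\ell(i)}$ at $O(1)$ cost apiece.

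For the sampling step I would argue block by block. Within block $\ell$ the normalising constant for the mixture weights $p_\ell^j$ is exactly $d\,\mathbb{W}_n^\ell$, already computed, so the categorical distribution $(p_\ell^j)_{j\in B(\ell)}$ on the $d$ indices in $B(\ell)$ is available after an $O(d)$ set-up. By Algorithm~\ref{alg:aSMC} the variables $\{\zeta_n^i\}_{i\in B(\ell)}$ are conditionally independent given $\mathcal{F}_{n-1}$, each generated by first drawing an ancestor index $j\in B(\ell)$ from $(p_\ell^j)$ and then drawing from $f(\zeta_{n-1}^j,\cdot)$. Drawing $d$ i.i.d.\ samples from a categorical distribution on $d$ categories can be done in $O(d)$ total time (for instance by Walker's alias method: $O(d)$ preprocessing and $O(1)$ per draw), and, adopting the standard convention that a single draw from $f$ and a single uniform draw each have unit cost -- the convention under which SIS and the BPF are $O(N)$ per step -- the subsequent $d$ draws from $f$ also cost $O(d)$. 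Summing the per-block costs over the $m$ blocks yields $O(N)$ for the sampling step, and adding the $O(N)$ weight cost proves the claim.

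The only genuinely delicate point is the one just flagged: a naive implementation that samples each of the $d$ ancestors by an independent inverse-c.d.f.\ lookup would cost $O(d^2)$ per block, i.e.\ up to $O(N^2)$ overall in the degenerate case $d=N$ of a single block, so it is important to commit to an $O(d)$-total categorical sampler (the alias method, or equivalently the $O(d)$ routines routinely used for multinomial resampling in SMC). Once this is in place the rest is bookkeeping over the block partition, and it is worth noting that the extreme cases $d=1$ and $d=N$ recover respectively the trivial update of SIS and the single-pool update of the BPF, consistently with the known $O(N)$ complexity of those algorithms.
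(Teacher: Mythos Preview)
Your proof is correct and follows essentially the same route as the paper: decompose by connected components, observe that the weights are constant within each block and computable in $O(|B(\ell)|)$ per block, and then argue that drawing $|B(\ell)|$ i.i.d.\ samples from the common categorical mixture costs $O(|B(\ell)|)$, so that summing over the partition gives $O(N)$. The only cosmetic differences are that you make explicit (via regularity) that all blocks have the same size $d$, whereas the paper keeps the generic $|B(\ell)|$ notation, and you invoke the alias method for the $O(d)$ categorical sampling where the paper simply cites \citep{carpenter1999improved}.
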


When calculating the overall complexity of Algorithm \ref{alg:aSMC}
we must also consider the complexity of line $(\star)$, which in
general depends on $\mathbb{A}_{N}$ and the particular functional
used to choose $\alpha_{n}$. We resume this complexity discussion
after describing the specifics of some adaptive algorithms.

\textit{Adaptive interaction}.

Throughout this section, we set $m\in\mathbb{N}$ and then $N=2^{m}$.
Consider Algorithm \ref{alg:aSMC} with $\mathbb{A}_{N}$ chosen
to be the set of B-matrices of size $N$. We suggest three adaptation
rules at line $(\star)$ of Algorithm \ref{alg:aSMC}: Simple, Random,
and Greedy, all implemented via Algorithm \ref{alg:generic adaptation}
(note that dependence of some quantities on $n$ is suppressed from
the notation there), but differing in the way they select the index
list $\mathcal{I}_{k}$ which appears in the ``while'' loop of that
procedure. The methods for selecting $\mathcal{I}_{k}$ are summarised
in Table~\ref{tab:Choosing_I_k}: the Simple rule needs little explanation,
the Random rule implements an independent random shuffling of indices
and the Greedy rule is intended, heuristically, to pair large weights,
$\mathbb{W}_{k}^{i}$, with small weights in order to terminate the
``while'' loop with as small a value of $k$ as possible. Note that,
formally, in order for our results for $\alpha$SMC to apply when
the Random rule is used, the underlying probability space must be
appropriately extended, but the details are trivial so we omit them.

\begin{algorithm}
\begin{raggedright}
\quad At iteration $n$ and line $(\star)$ of Algorithm \ref{alg:aSMC}
\end{raggedright}

\begin{raggedright}
\qquad For $i=1,\ldots,N$,
\end{raggedright}

\begin{raggedright}
\qquad\quad Set $B(0,i)=\{i\}$, $\mathbb
{W}_{0}^{i}=W_{n-1}^{i}g_{n-1}(\zeta_{n-1}^{i})$,
\end{raggedright}

\begin{raggedright}
\qquad Set $k=0$,
\end{raggedright}

\begin{raggedright}
\qquad Set $\overline{\mathbb{W}}_{0}=N^{-1}\sum_{i}\mathbb{W}_{0}^{i}$
, $\mathcal{E}=\frac{ (\overline{\mathbb{W}}_{0}
)^{2}}{N^{-1}\sum_{i} (\mathbb{W}_{0}^{i} )^{2}}$,
\end{raggedright}

\begin{raggedright}
\qquad While $\mathcal{E}<\tau$
\end{raggedright}

\begin{raggedright}
\qquad\quad Set $\mathcal{I}_{k}$ according to the
Simple, Random or Greedy scheme of Table~\ref{tab:Choosing_I_k}
\end{raggedright}

\begin{raggedright}
\qquad\quad For $i=1,\ldots,N/2^{k+1}$
\end{raggedright}

\begin{raggedright}
\qquad\qquad Set $B(k+1,i)=B(k,\mathcal
{I}_{k}(2i-1))\cup B(k,\mathcal{I}_{k}(2i))$
\end{raggedright}

\begin{raggedright}
\qquad\qquad Set $\mathbb{W}_{k+1}^{i}=\mathbb
{W}_{k}^{\mathcal{I}_{k}(2i-1)}/2+\mathbb{W}_{k}^{\mathcal{I}_{k}(2i)}/2$
\end{raggedright}

\begin{raggedright}
\qquad\quad Set $k=k+1$
\end{raggedright}

\begin{raggedright}
\qquad\quad Set $\mathcal{E}=\frac{ (\overline{\mathbb
{W}}_{0} )^{2}}{N^{-1}2^{k}\sum_{i\in[N/2^{k}]} (\mathbb
{W}_{k}^{i} )^{2}}$
\end{raggedright}

\begin{raggedright}
\qquad Set $K_{n-1}=k$
\end{raggedright}

\begin{raggedright}
\qquad Set $\alpha_{n-1}^{ij}=
\cases{
1/2^{K_{n-1}}, &\quad  $\mbox{if }i\sim j\mbox{ according to
$\{ B(K_{n-1},i) \} _{i\in[N/2^{K_{n-1}}]}$},$\vspace*{2pt}\cr
0, &\quad  $\mbox{otherwise}.$}
$
\end{raggedright}
\caption{Adaptive selection of
$\alpha_{n-1}$}
\label{alg:generic adaptation}
\end{algorithm}

Following the termination of the ``while'' loop, Algorithm \ref
{alg:generic adaptation}
outputs an integer $K_{n-1}$ and a partition $ \{ B(K_{n-1},i)
\} _{i\in[N/2^{K_{n-1}}]}$
of $[N]$ into $N/2^{K_{n-1}}$ subsets, each of cardinality $2^{K_{n-1}}$;
this partition specifies $\alpha_{n-1}$ as a B-matrix and $2^{K_{n-1}}$
is the degree of the corresponding graph (we keep track of $K_{n-1}$
for purposes of monitoring algorithm performance in Section~\ref{sub:Numerical-illustrations}).
Proposition~\ref{prop:Upon-termination-of} is a formal statement
of its operation and completes our complexity considerations. The
proof is given in the \hyperref[sec:Appendix]{Appendix}. It can be checked by an inductive
argument similar to the proof of Lemma~\ref{lem:ARPF_A2}, also in
the \hyperref[sec:Appendix]{Appendix}, that when $\alpha_{n}$ is chosen according to
Algorithm \ref{alg:generic adaptation}
combined with any of the adaptation rules in Table~\ref{tab:Choosing_I_k},
\textup{(A2)} is satisfied.

\begin{prop}
\label{prop:Upon-termination-of}The weights $ \{ \mathbb
{W}_{k}^{i} \} _{i\in[N/2^{k}]}$
calculated in Algorithm \ref{alg:generic adaptation} obey the expression
%
\begin{equation}
\mathbb{W}_{k}^{i}=2^{-k}\sum
_{j\in B(k,i)}W_{n-1}^{j}g_{n-1}\bigl(
\zeta _{n-1}^{j}\bigr).\label{eq:W_k_explicit}
\end{equation}
Moreover, $\alpha_{n-1}$ delivered by Algorithm \ref{alg:generic adaptation}
is a B-matrix and when this procedure is used at line $(\star)$ of
Algorithm \ref{alg:aSMC}, the weights calculated in Algorithm \ref{alg:aSMC}
are given, for any $i\in[N/2^{K_{n-1}}]$, by
%
\begin{equation}
W_{n}^{j}=\mathbb{W}_{K_{n-1}}^{i}\qquad
\mbox{for all }j\in B(K_{n-1},i)\label{eq:W_equals_bb_W}
\end{equation}
and $\mathcal{E}_{n}^{N}\geq\tau$ always. The overall worst-case
complexity of Algorithm \ref{alg:aSMC} is, for the three adaptation
rules in Table~\ref{tab:Choosing_I_k}, Simple: $\mathrm{O}(N)$, Random:
$\mathrm{O}(N)$, and Greedy: $\mathrm{O}(N\log_{2}N)$.
\end{prop}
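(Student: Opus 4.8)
The plan is to establish the four assertions in turn, each resting on an elementary induction over the counter $k$ of the ``while'' loop of Algorithm~\ref{alg:generic adaptation}, together with a geometric-series estimate for the complexity. I would first prove the weight identity (\ref{eq:W_k_explicit}) by induction on $k$. The base case $k=0$ is immediate from the initialisation $B(0,i)=\{i\}$, $\mathbb{W}_0^i=W_{n-1}^ig_{n-1}(\zeta_{n-1}^i)$. For the inductive step the key observation is that since each $\mathcal{I}_k$ is a permutation of $[N/2^k]$, the sets $\{B(k+1,i)\}_{i\in[N/2^{k+1}]}$ again form a partition of $[N]$; in particular $B(k,\mathcal{I}_k(2i-1))$ and $B(k,\mathcal{I}_k(2i))$ are disjoint with union $B(k+1,i)$ of cardinality $2^{k+1}$. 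Substituting the inductive hypothesis into $\mathbb{W}_{k+1}^i=\mathbb{W}_k^{\mathcal{I}_k(2i-1)}/2+\mathbb{W}_k^{\mathcal{I}_k(2i)}/2$ and using this disjointness gives (\ref{eq:W_k_explicit}) at level $k+1$. Summing (\ref{eq:W_k_explicit}) over $i$ also yields the auxiliary identity $\sum_{i\in[N/2^k]}\mathbb{W}_k^i=2^{-k}N\overline{\mathbb{W}}_0$, which will be used below.

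Next, since $\{B(K_{n-1},i)\}$ partitions $[N]$ into blocks all of cardinality $2^{K_{n-1}}$, the graph on vertex set $[N]$ with an edge between two vertices iff they lie in a common block, together with a self-loop at every vertex, has all connected components equal to complete graphs on $2^{K_{n-1}}$ vertices and is $2^{K_{n-1}}$-regular; the matrix $\alpha_{n-1}$ delivered by Algorithm~\ref{alg:generic adaptation} is exactly the transition matrix of the simple random walk on this graph, hence a B-matrix. Feeding this $\alpha_{n-1}$ into the weight recursion (\ref{eq:W_n_defn}) of Algorithm~\ref{alg:aSMC}, for $j\in B(K_{n-1},i)$ we obtain $W_n^j=\sum_\ell\alpha_{n-1}^{j\ell}W_{n-1}^\ell g_{n-1}(\zeta_{n-1}^\ell)=2^{-K_{n-1}}\sum_{\ell\in B(K_{n-1},i)}W_{n-1}^\ell g_{n-1}(\zeta_{n-1}^\ell)$, which equals $\mathbb{W}_{K_{n-1}}^i$ by (\ref{eq:W_k_explicit}); this is (\ref{eq:W_equals_bb_W}). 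That the resulting $\alpha_n$ satisfies \textbf{(A2)} follows by an induction paralleling that of Lemma~\ref{lem:ARPF_A2}.

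For the ESS guarantee, combine (\ref{eq:W_equals_bb_W}) with the fact that each of the $N/2^{K_{n-1}}$ blocks has exactly $2^{K_{n-1}}$ members, and with the auxiliary identity above, to compute $N^{-1}\sum_iW_n^i=2^{-K_{n-1}}\sum_{i\in[N/2^{K_{n-1}}]}\mathbb{W}_{K_{n-1}}^i=\overline{\mathbb{W}}_0$ and $N^{-1}\sum_i(W_n^i)^2=N^{-1}2^{K_{n-1}}\sum_{i\in[N/2^{K_{n-1}}]}(\mathbb{W}_{K_{n-1}}^i)^2$. Comparing with (\ref{eq:ESS_defn_front}) shows that $\mathcal{E}_n^N$ coincides exactly with the quantity $\mathcal{E}$ that Algorithm~\ref{alg:generic adaptation} uses in its termination test at the final counter value $k=K_{n-1}$ (and, in the degenerate case where the loop body is never entered, with the initial value of $\mathcal{E}$, for which $K_{n-1}=0$ and $\alpha_{n-1}=Id$). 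It remains to note that the loop terminates: since $N=2^m$, after at most $m$ iterations there is a single block $B(m,1)=[N]$ with $\mathbb{W}_m^1=\overline{\mathbb{W}}_0$, so the computed $\mathcal{E}$ equals $1\ge\tau$ and the loop exits with $K_{n-1}\le m$; by the exit condition, $\mathcal{E}\ge\tau$ on termination, hence $\mathcal{E}_n^N\ge\tau$ always.

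Finally, for the complexity, represent each current block as a linked list of its members, so that a block merge is an $O(1)$ concatenation; then at iteration $k$ the merging of the $N/2^{k+1}$ pairs of blocks and of their weights, and the recomputation of $\mathcal{E}$, all cost $O(N/2^{k+1})$, while selecting $\mathcal{I}_k$ costs $O(N/2^k)$ for the Simple and Random rules (identity list, or one pass of Fisher--Yates) and $O((N/2^k)\log(N/2^k))=O((N/2^k)(m-k))$ for Greedy (a sort of the $N/2^k$ weights $\mathbb{W}_k^i$). Summing over $k=0,\dots,\le m-1$, plus the $O(N)$ initialisation and the $O(N)$ final read-off of the partition, gives $O(N)$ for Simple and Random; for Greedy, substituting $j=m-k$ gives $\sum_{j=1}^m 2^{-(m-j)}Nj=N2^{-m}\sum_{j=1}^m j2^j=O(Nm)=O(N\log_2N)$, which dominates. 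I expect the only step requiring genuine care to be the ESS guarantee, namely identifying the internally computed $\mathcal{E}$ with the externally defined coefficient $\mathcal{E}_n^N$ of (\ref{eq:ESS_defn_front}) through the block-size bookkeeping and arguing termination of the loop; the remaining steps are routine inductions and a geometric sum.
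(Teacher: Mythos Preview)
Your proposal is correct and follows essentially the same approach as the paper's own proof: induction on $k$ for (\ref{eq:W_k_explicit}), the partition-into-equal-blocks observation for the B-matrix property and (\ref{eq:W_equals_bb_W}), identification of the algorithm's $\mathcal{E}$ with $\mathcal{E}_n^N$, and a geometric-type sum for the complexity. The only minor presentational differences are that you argue loop termination and the degenerate $K_{n-1}=0$ case explicitly (the paper leaves these implicit), you identify $\mathcal{E}$ with $\mathcal{E}_n^N$ by direct block-counting whereas the paper invokes the fact that a B-matrix leaves the uniform distribution invariant, and for the Greedy complexity you evaluate the sum directly via the substitution $j=m-k$ while the paper solves the equivalent recursion $t(N)=t(N/2)+N\log_2 N$ by induction; none of these constitute a genuinely different route.
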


\begin{table}
\caption{Adaptation rules for choosing
$\mathcal{I}_{k}$}\label{tab:Choosing_I_k}
\begin{tabular*}{\textwidth}{@{\extracolsep{\fill}}lp{330pt}@{}}
\hline
Simple & set $\mathcal
{I}_{k}=(1,\ldots,N/2^{k})$\\
Random &
if $k=0$, set $\mathcal{I}_{k}$ to a random permutation
of $[N/2^{k}]$, otherwise $\mathcal{I}_{k}=(1,\ldots,
,N/2^{k})$\\
Greedy &
set $\mathcal{I}_{k}$ such that $\mathbb{W}_{k}^{\mathcal
{I}_{k}(1)}\geq\mathbb{W}_{k}^{\mathcal{I}_{k}(3)}\geq\cdots\geq\mathbb
{W}_{k}^{\mathcal{I}_{k}(N/2^{k}-1)}\geq\mathcal{\mathbb
{W}}_{k}^{\mathcal{I}_{k}(N/2^{k})}\geq\cdots\geq\mathbb{W}_{k}^{\mathcal
{I}_{k}(4)}\geq\mathbb{W}_{k}^{\mathcal{I}_{k}(2)}$\\
\hline
\end{tabular*}
\end{table}

\subsection{Numerical illustrations}
\label{sub:Numerical-illustrations}

We consider a stochastic volatility HMM:
\begin{eqnarray*}
X_{0}&\sim&\mathcal{N}(0,1),\qquad X_{n}=aX_{n-1}+
\sigma V_{n},
\\
 Y_{n}&=&\varepsilon W_{n}\exp(X_{n}/2),
\end{eqnarray*}
where $ \{ V_{n} \} _{n\in\mathbb{N}}$ and $ \{ W_{n}
\} _{n\in\mathbb{N}}$
are sequences of mutually i.i.d. $\mathcal{N}(0,1)$ random variables,
$\llvert a\rrvert <1$, and $\sigma,\varepsilon>0$. To study the behaviour
of the different adaptation rules in terms of effective sample size,
a sequence of $3\cdot10^{4}$ observations were generated from the
model with $a=0.9$, $\sigma=0.25$, and $\varepsilon=0.1$. This
model obviously does not satisfy (C), but (A1)
is satisfied as long as the observation record does not include the
value zero.

\begin{figure}

\includegraphics{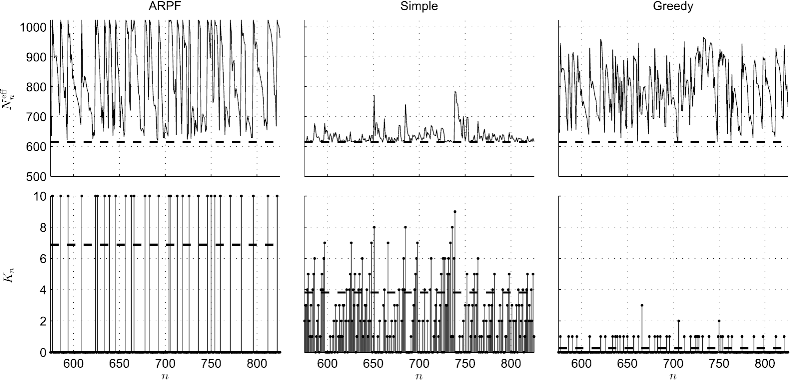}

\caption{Snapshots of ESS and degree of
interaction.
Top: $N_{n}^{\mathrm{eff}}$ vs. $n$ (solid) and threshold $\tau N$
(dashed). Bottom: $K_{n}$ vs. $n$ (stems) and the base two logarithm
of the time-average of $2^{K_{n}}$ (dashed). Recall from Section \protect\ref{sub:Algorithms-with-adaptive} that $2^{K_{n}}$ is the degree
of the graph corresponding to the matrix $\alpha_{n}$ selected by
Algorithm \protect\ref{alg:generic adaptation}, and returned to line $(\star)$
of Algorithm \protect\ref{alg:aSMC}.}\label{fig:ESS-and-interaction}
\end{figure}

The ARPF and $\alpha$SMC with the Simple, Random and Greedy adaptation
procedures specified in Section~\ref{sub:Algorithms-with-adaptive}
were run on this data with $N=2^{10}$ and threshold $\tau=0.6$.
To give some impression of ESS and interaction behaviour, Figure~\ref{fig:ESS-and-interaction}
shows snapshots of $N_{n}^{\mathrm{eff}}$ and $K_{n}$ versus $n$,
for $575\leq n\leq825$. The sample path of $N_{n}^{\mathrm{eff}}$
for ARPF displays a familiar saw-tooth pattern, jumping back up to
$N=2^{10}$ when resampling, that is, when $K_{n}=10$. The Simple adaptation
scheme keeps $N_{n}^{\mathrm{eff}}$ just above the threshold $\tau
N=0.6\times2^{10}$,
whereas the Greedy strategy is often able to keep $N_{n}^{\mathrm{eff}}$
well above this threshold, with smaller values of $K_{n}$, that is, with
a lower degree of interaction. The results for the Random adaptation
rule, not shown in this plot, where qualitatively similar to those
of the Greedy algorithm but slightly closer to the Simple adaptation.

In order to examine the stationarity of the particle processes as
well as the statistical behavior of the degree of interaction over
time, Figure~\ref{fig:histograms_and_E_vs_k} shows two histograms
of $K_{n}$ for each of the adaptation rules. One histogram is based
on the sample of $K_{n}$ where $100<n\leq15\,050,$ and the other is
based on $K_{n}$ where $15\,050<n\leq30\,000$. For each algorithm, the
similarity between the histograms for the two time intervals suggests
that the process $ \{ K_{n} \} _{n\geq0}$ is stationary.
As expected, the distribution of $K_{n}$ for ARPF is dichotomous
taking only values equal to $K_{n}=0$ when there is no interaction,
that is, the resampling is skipped or $K_{n}=10$ for the complete interaction,
that is, resampling. It is apparent that the Simple, Random and Greedy
algorithms move the distribution of $K_{n}$ toward smaller values
and almost always manage to avoid the complete interaction. For the
Random and Greedy algorithms, $K_{n}$ rarely exceeds $1$, that is, in
order to guarantee $\mathcal{E}_{n}^{N}$ it is rarely necessary to
consider anything more than pair-wise interaction.

The plot on the right of Figure~\ref{fig:histograms_and_E_vs_k}
shows, for each of the Simple, Random and Greedy adaptation rules,
the relationship between the intermediate variables $\mathcal{E}$
and $k$ appearing in the ``while'' loop of Algorithm \ref{alg:generic
adaptation}.
In order to obtain equal sample sizes for plotting purposes,
Algorithm \ref{alg:generic adaptation}
was modified slightly so as to evaluate $\mathcal{E}$ for every value
$k\in\{0,\ldots,m\}$, whilst still outputting $K_{n-1}$ as the smallest
value of $k$ achieving $\mathcal{E}\geq\tau$. The plotted data were
then obtained, for each $k$, by averaging the corresponding values
of $\mathcal{E}$ over the time steps of the algorithm. It is apparent
that, for small values of $k$, the Random and Greedy strategies achieve
a faster increase in $\mathcal{E}$ than the Simple strategy, and
this explains the shape of the histograms on the left of Figure~\ref{fig:histograms_and_E_vs_k}.
%
\begin{figure}

\includegraphics{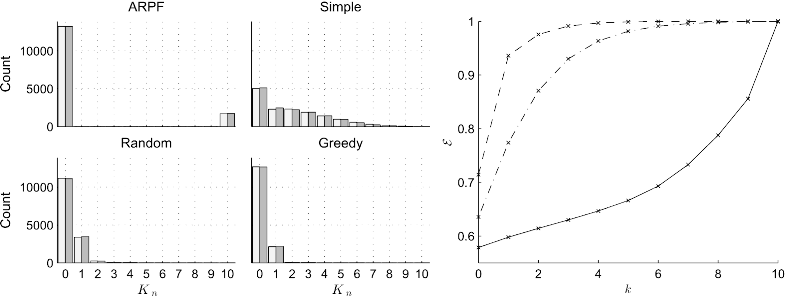}

\caption{Left: Histograms of
$K_{n}$ for
the ARPF and the three adaptation rules of Table \protect\ref{tab:Choosing_I_k}.
The light bars were obtained from $ \{ K_{n};n=101,\ldots
,15\,050 \} $
and the dark bars from $ \{ K_{n};n=15\,051,\ldots,30\,000 \} $
Right: Growth of $\mathcal{E}$ vs. $k$ for the Simple (solid),
Random (dash-dot) and Greedy (dashed).}
\label{fig:histograms_and_E_vs_k}
\end{figure}

Figure~\ref{fig:Mean-square-error} shows a comparison of the mean
squared errors (MSE) of approximating the conditional expectation
of $\phi(X_{p})$ with respect to the underlying stochastic volatility
HMM given the observations $\{y_{n};0\leq n\leq p+\ell\}$, where
$\ell\in\{-5,0,1\}$ and $\phi$ is some test function. The cases,
$\ell=-5$, $\ell=0$, and $\ell=1$ correspond to the lag 5 smoother,
filter and one step predictor, respectively. The lag 5 smoother results
were obtained by tracing back ancestral lineages. In order to estimate
the approximation error, a reference value for the conditional expectation
was evaluated by running a BPF with a large sample size $N=2^{17}$.
Approximation errors were evaluated for $N_{\mathrm{MC}}=1000$ Monte
Carlo runs of 1000 time steps each with $N=2^{9}$, and MSE was obtained
by averaging over the time steps and the Monte Carlo runs. First 30
time steps were excluded in the calculations to avoid any non-stationary
effects due to initialization. The results show that the Random and
Greedy algorithms produce consistently smaller errors than the Simple
algorithm and for large values of $\tau$ the Greedy algorithm appears
to consistently outperform ARPF.

\begin{figure}

\includegraphics{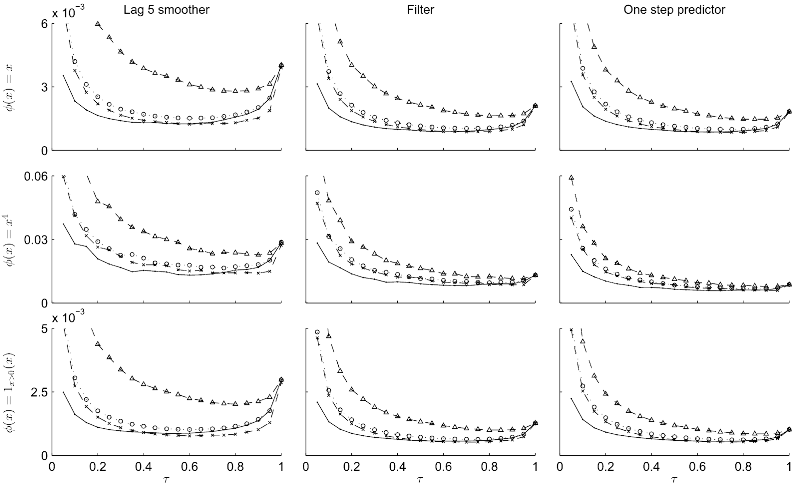}

\caption{MSE vs. $\tau$ for the lag 5 smoother,
filter, and one step predictor using the four algorithms ARPF (solid),
Simple ($\triangle$), Random ($\circ$), and Greedy ($\times)$ and
three test functions $\phi$.} \label{fig:Mean-square-error}
\end{figure}

\subsection{Concluding remarks}
\begin{itemize}
\item The martingale decomposition presented in Proposition~\ref
{prop:martingale}
may also be exploited to pursue central limit theorems. A study of
this will be conducted elsewhere, but we believe, further to Remark~\ref
{rem_alpha_conv},
that it will in general involve some further hypotheses in order to
ensure convergence of the covariance of this martingale and thus prove
the existence of a well-defined asymptotic variance.
\item It is worth pointing out that there are also SMC algorithms other
than those listed in Section~\ref{sub:Instances-of-SMC} that can
be formulated as instances of $\alpha$SMC, for example, the stratified
resampling
algorithm of Kitagawa \cite{Kitagawa1996} and the auxiliary particle filter
of Pitt and Shephard \cite{pitt1999filtering}. It should be kept in mind, however,
that the successful formulation of any algorithm as an instance of
$\alpha$SMC does not necessarily imply that the assumptions (B),
(B$^{+}$) or (B$^{++}$) hold, and the validity
of Theorems \ref{thmm:convergence} and \ref{thmm:L_R_mix} is in
that sense, of course, not automatic.
\end{itemize}

\begin{appendix}
\section*{Appendix}
\label{sec:Appendix}

\begin{lem}
\label{lem:ARPF_A2}If for every $n\geq0$, $\alpha_{n}$ is chosen
according to the ESS thresholding rule (\ref{eq:alpha_ARPF}), then~\textup{(A2)} is satisfied.
\end{lem}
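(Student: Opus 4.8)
The statement to be proved is that the ESS thresholding rule (\ref{eq:alpha_ARPF}) satisfies $\mathbf{(A2)}$, i.e.\ that for every $n\geq 0$ the entries of $\alpha_{n}$ are $\mathcal{F}_{n}$-measurable. The plan is a straightforward induction on $n$. The only point requiring care --- and the reason this deserves a lemma --- is an apparent circularity: the rule selects $\alpha_{n-1}$ as a function of $\left\{W_{n-1}^{i}\right\}_{i\in[N]}$ and $\left\{\zeta_{n-1}^{i}\right\}_{i\in[N]}$, yet each $W_{n-1}^{i}$ is defined by the recursion (\ref{eq:W_n_defn}) in terms of $\alpha_{0},\ldots,\alpha_{n-2}$, so the $W_{n-1}^{i}$ are not obviously ``already available'' at line $(\star)$. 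This is resolved using the unwound representation (\ref{eq:unwind2}) of Lemma~\ref{lem:W_n_representation}, which expresses $W_{n-1}^{i}$ as a finite sum of finite products of the quantities $g_{p}(\zeta_{p}^{i_{p}})$ and $\alpha_{p}^{i_{p+1}i_{p}}$ over $0\leq p\leq n-2$.

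For the base case $n=0$: $W_{0}^{i}\equiv 1$ is deterministic, each $\zeta_{0}^{i}$ is by construction $\mathcal{F}_{0}$-measurable, and $g_{0}$ is a fixed $\mathcal{X}$-measurable function which is strictly positive by $\mathbf{(A1)}$. Hence the ratio in (\ref{eq:alpha_ARPF}) that determines $\alpha_{0}$ has strictly positive denominator and is an $\mathcal{F}_{0}$-measurable random variable; the rule then picks $\alpha_{0}\in\{\mathbf{1}_{1/N},Id\}$ according to a Borel-measurable test applied to this ratio, so the entries of $\alpha_{0}$ are $\mathcal{F}_{0}$-measurable.

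For the inductive step, assume that for each $k\leq n-1$ the entries of $\alpha_{k}$ are $\mathcal{F}_{k}$-measurable, hence all $\mathcal{F}_{n-1}$-measurable. By (\ref{eq:unwind2}), $W_{n}^{i}$ is a finite sum of finite products of the $\mathcal{F}_{n-1}$-measurable quantities $g_{p}(\zeta_{p}^{i_{p}})$ and $\alpha_{p}^{i_{p+1}i_{p}}$, $0\leq p\leq n-1$, and is therefore $\mathcal{F}_{n-1}$-measurable --- this is exactly the content of Corollary~\ref{cor:measurability}, obtained here without presupposing $\mathbf{(A2)}$. Moreover, since every row of every $\alpha_{p}$ sums to one, one may trace back a chain of strictly positive entries to see that each $W_{n}^{i}$ contains a strictly positive summand, so $W_{n}^{i}>0$; together with $g_{n}>0$ from $\mathbf{(A1)}$ this makes the denominator of the ratio determining $\alpha_{n}$ via (\ref{eq:alpha_ARPF}) strictly positive, so that ratio is a well-defined $\mathcal{F}_{n}$-measurable random variable, being a ratio of polynomials in the $\mathcal{F}_{n-1}$-measurable $W_{n}^{i}$ and the $\mathcal{F}_{n}$-measurable $g_{n}(\zeta_{n}^{i})$. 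Applying (\ref{eq:alpha_ARPF}) once more exhibits $\alpha_{n}$ as an $\mathcal{F}_{n}$-measurable selection from $\{\mathbf{1}_{1/N},Id\}$, which closes the induction and establishes $\mathbf{(A2)}$. I expect the circular-looking dependence of $\alpha_{n-1}$ on $W_{n-1}$ to be the only conceptual obstacle, and it is dissolved entirely by (\ref{eq:unwind2}); everything else is routine verification.
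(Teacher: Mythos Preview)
Your proof is correct and follows essentially the same route as the paper: an induction on $n$ in which the key step uses the unwound representation (\ref{eq:unwind2}) of Lemma~\ref{lem:W_n_representation} to show that each $W_{n}^{i}$ is $\mathcal{F}_{n-1}$-measurable once the $\alpha_{p}$ for $p\leq n-1$ are known to be so, whence the ratio in (\ref{eq:alpha_ARPF}) is $\mathcal{F}_{n}$-measurable. Your additional remarks on the strict positivity of the denominator via $\mathbf{(A1)}$ and the resolution of the apparent circularity are extra detail not spelled out in the paper, but the core argument is the same.
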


\begin{pf}
The proof is by induction. To initialize, we have at rank $n=0$,
%
\renewcommand{\theequation}{\arabic{equation}}
\setcounter{equation}{52}
\begin{equation}
\alpha_{0}:= %
\cases{ \mathbf{1}_{1/N}, &\quad $
\mbox{if }\displaystyle \frac{ (N^{-1}\sum_{i}W_{0}^{i}g_{0}(\zeta_{0}^{i}) )^{2}}{N^{-1}\sum_{i}
(W_{0}^{i}g_{0}(\zeta_{0}^{i}) )^{2}}<\tau$,\vspace*{2pt}
\cr
\mathrm{Id}, &\quad $\mbox{otherwise}$,}
\label{eq:alpha_ARPF-1}
\end{equation}
noting that by definition $W_{0}^{i}=1$, we find that the entries
of $\alpha_{0}$ are measurable w.r.t. $\mathcal{F}_{0}$. For the
induction hypothesis, suppose that for some $n\geq0$ and all $p\leq n$,
the entries of $\alpha_{p}$ are measurable w.r.t. $\mathcal{F}_{n}$.
It follows immediately from Lemma~\ref{lem:W_n_representation},
equation (\ref{eq:unwind2}), that $ \{ W_{n+1}^{i} \} _{i\in[N]}$
are all measurable w.r.t. $\mathcal{F}_{n+1}$, and it follows from
(\ref{eq:alpha_ARPF}) applied at rank $n+1$ that the entries of
$\alpha_{n+1}$ are measurable w.r.t. $\mathcal{F}_{n+1}$, and hence
the induction hypothesis holds at rank $n+1$. This completes the
proof.
\end{pf}

\textit{Resampling times description of the ARPF}.
In order to derive expressions for $\pi_{n}^{N}$ and $Z_{n}^{N}$
in the case of the ARPF, define a family of random sets $ \{ \sigma
_{n};n\geq1 \} $,
and random times $ \{ T_{n};n\geq1 \} $ as follows
%
\renewcommand{\theequation}{\arabic{equation}}
\setcounter{equation}{53}
\begin{eqnarray}\label{eq:T_n_defn}
\sigma_{n} & := & \{ m; 1\leq m\leq n\mbox{ and } \alpha
_{m-1}=\mathbf{1}_{1/N} \},
\nonumber
\\[-8pt]
\\[-8pt]
\nonumber
T_{n} & := & \max (\sigma_{n} ),
\end{eqnarray}
with $T_{n}:=0$ on the event $ \{ \sigma_{n}=\varnothing \} $.
Intuitively, $T_{n}$ can be thought of as the last resampling time
before $n$. Then by construction, using the recursive definition
of $W_{n}^{i}$ in (\ref{eq:W_n_defn}), and (\ref{eq:T_n_defn}),
we have on the event $ \{ \sigma_{n}\neq\varnothing \} $,
%
\begin{eqnarray}\label{eq:W_adapt}
W_{T_{n}}^{i} & =&\sum_{j}
\alpha _{T_{n}-1}^{ij}W_{T_{n}-1}^{j}g_{T_{n}-1}
\bigl(\zeta_{T_{n}-1}^{j}\bigr)
\nonumber
\\[-8pt]
\\[-8pt]
\nonumber
&  =&\frac{1}{N}\sum_{j}W_{T_{n}-1}^{j}g_{T_{n}-1}
\bigl(\zeta _{T_{n}-1}^{j}\bigr) =: \widetilde{W}_{n},\qquad
n\geq1,
\end{eqnarray}
which is independent of $i$. On the event $\{\sigma_{n}=\varnothing\}$,
define $\widetilde{W}_{n}:=1$.

On the event $\{\sigma_{n}\neq\varnothing\}\cap\{T_{n}=n\}$, we trivially
have $W_{n}^{i}=W_{T_{n}}^{i}=\widetilde{W}_{n}$, by (\ref{eq:W_adapt}).
On the event $\{\sigma_{n}\neq\varnothing\}\cap\{T_{n}<n\}$, applying
equation (\ref{eq:unwind}) of Lemma~\ref{lem:W_n_representation}
with $p=T_{n}$, and (\ref{eq:W_adapt}), yields
\begin{eqnarray*}
W_{n}^{i_{n}} & = & \sum_{ (i_{T_{n}},\ldots,i_{n-1} )\in
[N]^{n-T_{n}}}W_{T_{n}}^{i_{T_{n}}}
\prod_{q=T_{n}}^{n-1}g_{q}\bigl(\zeta
_{q}^{i_{q}}\bigr)\alpha_{q}^{i_{q+1}i_{q}}
\\
& = & \widetilde{W}_{n}\sum_{ (i_{T_{n}},\ldots,i_{n-1} )\in
[N]^{n-T_{n}}}\prod
_{q=T_{n}}^{n-1}g_{q}\bigl(
\zeta_{q}^{i_{q}}\bigr)\mathbb {I}[i_{q+1}=i_{q}]
 =  \widetilde{W}_{n}\prod_{p=T_{n}}^{n-1}g_{p}
\bigl(\zeta_{p}^{i_{n}}\bigr).
\end{eqnarray*}
Collecting the above definitions and substituting into (\ref{eq:pi^N_andZ^N})
gives
\[
\pi_{n}^{N}=\frac{\sum_{i}\delta_{\zeta_{n}^{i}}\prod_{p=T_{n}}^{n-1}g_{p}
(\zeta_{p}^{i})}{\sum_{i}\prod_{p=T_{n}}^{n-1}g_{p}(\zeta_{p}^{i})},\qquad Z_{n}^{N}=
\widetilde {W}_{n}\cdot\frac{1}{N}\sum
_{i}\prod_{p=T_{n}}^{n-1}g_{p}
\bigl(\zeta_{p}^{i}\bigr),
\]
with the convention $\prod_{p=n}^{n-1}g_{p}(\zeta_{p}^{i})=1$. Similar
elementary calculations can be used to derive expressions for the
sampling steps of the ARPF, in the interests of brevity we leave it
to the reader to write out the details.

\subsection*{Proofs and auxiliary results for Section \protect\ref
{sec:Martingale-approximations-and}}

The proof of the main result of Section~\ref{sec:Martingale-approximations-and},
Theorem~\ref{thmm:convergence}, hinges on a martingale decomposition
of errors associated with $Z_{n}^{N}$ and $\pi_{n}^{N}(\varphi)$.
This is the subject of Proposition~\ref{prop:martingale}, which we
prove below. Our overall approach is inspired by some of the ideas
of \cite{smc:theory:Dm04}, Chapters 7 and 9, but the path we take
and the details are necessarily different since the analysis of \cite
{smc:theory:Dm04}
does not apply to $\alpha$SMC in general. The following well-known
lemma has been used extensively in the study of sequential Monte Carlo
methods and we shall apply it in the proof of Proposition~\ref
{prop:martingale}.

\begin{lem}[(\cite{smc:theory:Dm04}, Lemma~7.3.3)]
\label{lem:del_moral_moment}
Let $(\mu_{i})_{i\geq1}$ and $(h_{i})_{i\geq1}$ be, respectively,
a sequence of probability measures and a sequence of $\mathbb{R}$-valued,
measurable functions with finite oscillations on a given measurable
space $ (E,\mathcal{E} )$. Assume that $\mu_{i}(h_{i})=0$
for all $i\geq1$ and let $ (X_{i} )_{i\geq1}$ be a sequence
of independent random variables with respective distributions $(\mu
_{i})_{i\geq1}$.
Then for any $r\geq1$,
\[
\sqrt{N}\mathbb{E} \Biggl[\Biggl\llvert N^{-1}\sum
_{i=1}^{N}h_{i}(X_{i})\Biggr
\rrvert ^{r} \Biggr]^{1/r}\leq d(r)^{1/r}\sqrt
{N^{-1}\sum_{i=1}^{N}
\bigl[\operatorname{osc}(h_{i}) \bigr]^{2}},
\]
where for $r\geq1$,
\[
d(2r)=2^{-r}\frac{(2r)!}{r!},\qquad d(2r-1)=\frac
{2^{-(r-1/2)}}{(r-1/2)^{1/2}}
\frac{(2r-1)!}{(r-1)!}.
\]
\end{lem}

\begin{pf*}{Proof of Proposition~\ref{prop:martingale}}
Applying the identities
$\beta_{p-1,n}^{i_{p-1}}=\sum_{i_{p}}\beta_{p,n}^{i_{p}}\alpha
_{p-1}^{i_{p}i_{p-1}}$,
see (\ref{eq:beta_defn}), and $\overline{W}_{p}^{i_{p}}=\sum_{i_{p-1}}\alpha_{p-1}^{i_{p}i_{p-1}}\overline
{W}_{p-1}^{i_{p-1}}\overline{Q}_{p}(1)(\zeta_{p-1}^{i_{p-1}})$,
see (\ref{eq:W_n_defn}), (\ref{eq:Q_bar_defn}), (\ref{eq:W_bar_defn});
with the conventions $\alpha_{-1}:=\mathrm{Id}$, and $\overline{\Gamma
}_{-1}^{N}\overline{Q}_{-1,n}(\varphi)=\overline{W}_{-1}^{i}\overline
{Q}_{-1,n}(\varphi)(\zeta_{-1}^{i}):=\mu_{0}\overline{Q}_{0,n}(\varphi
)=\pi_{n}(\varphi)$,
we have
%
\begin{eqnarray}\label{eq:Gamma-gamma_decomp}
& & \overline{\Gamma}_{n}^{N}(\varphi)-\pi_{n}(
\varphi)
\nonumber
\\
& &\quad =\sum_{p=0}^{n} \bigl[\overline{
\Gamma}_{p,n}^{N}\overline {Q}_{p,n}(\varphi)-
\overline{\Gamma}_{p-1,n}^{N}\overline {Q}_{p-1,n}(
\varphi) \bigr]
\nonumber
\\
& &\quad =\sum_{p=0}^{n} \biggl[\sum
_{i_{p}}\beta_{p,n}^{i_{p}}\overline
{W}_{p}^{i_{p}}\overline{Q}_{p,n}(\varphi) \bigl(
\zeta_{p}^{i_{p}}\bigr)-\sum_{i_{p-1}}
\sum_{i_{p}}\beta_{p,n}^{i_{p}}\alpha
_{p-1}^{i_{p}i_{p-1}}\overline{W}_{p-1}^{i_{p-1}}\overline
{Q}_{p-1,n}(\varphi) \bigl(\zeta_{p-1}^{i_{p-1}}\bigr)
\biggr]
\\
& &\quad =\sum_{p=0}^{n}\sum
_{i_{p}}\beta_{p,n}^{i_{p}}\overline
{W}_{p}^{i_{p}} \biggl[\overline{Q}_{p,n}(\varphi)
\bigl(\zeta _{p}^{i_{p}}\bigr)-\frac{\sum_{i_{p-1}}\alpha_{p-1}^{i_{p}i_{p-1}}\overline
{W}_{p-1}^{i_{p-1}}\overline{Q}_{p-1,n}(\varphi)(\zeta
_{p-1}^{i_{p-1}})}{\overline{W}_{p}^{i_{p}}} \biggr]
\nonumber
\\
& &\quad =\sum_{p=0}^{n}\sum
_{i_{p}}\beta_{p,n}^{i_{p}}\overline
{W}_{p}^{i_{p}}\Delta_{p,n}^{i_{p}}=N^{-1/2}
\sum_{k=1}^{(n+1)N}\xi _{k}^{N}.\nonumber
\end{eqnarray}
Each $\xi_{k}^{N}$ is measurable w.r.t. $\mathcal{F}^{(k)}$ because,
using Corollary~\ref{cor:measurability}, \textup{(A2)} and (B)
we have that for any $k=1,\ldots,(n+1)N$, if we set $p:= \lfloor
(k-1)/N \rfloor$
and $i:=k-pN$, the quantity $\Delta_{p,n}^{i}$ is measurable
w.r.t. $\mathcal{F}^{(k)}$
and $\beta_{p,n}^{i_{p}}\overline{W}_{p}^{i_{p}}$ is measurable
w.r.t. $\mathcal{F}_{p-1}$.

To verify (\ref{eq:xi_cond_exp}), again use the fact that for any
$i\in[N]$ and $0\leq p\leq n$, $\beta_{p,n}^{i}\overline{W}_{p}^{i}$
is measurable w.r.t. $\mathcal{F}_{p-1}$, and note that given $\mathcal
{F}_{p-1}$,
the particles $ \{ \zeta_{p}^{i} \} _{i=1}^{N}$ are conditionally
independent, and distributed as specified in Algorithm \ref{alg:aSMC}.
Hence for any $k=1,\ldots,(n+1)N$ and $p:= \lfloor(k-1)/N
\rfloor$
and $i:=k-pN$, we have $\mathbb{E} [\xi_{k}^{N}|\mathcal
{F}^{(k-1)} ]=\sqrt{N}\beta_{p,n}^{i}\overline{W}_{p}^{i}\mathbb
{E} [\Delta_{p,n}^{i}|\mathcal{F}_{p-1} ]=0$.

For the inequality (\ref{eq:martingale_burkholder_bound}), by Minkowski's
inequality and (\ref{eq:Gamma-gamma_decomp}),
%
\begin{equation}
\mathbb{E} \bigl[\bigl\llvert \overline{\Gamma}_{n}^{N}(
\varphi)-\pi_{n}(\varphi )\bigr\rrvert ^{r}
\bigr]^{1/r}\leq\sum_{p=0}^{n}
\mathbb{E} \bigl[\bigl\llvert \overline{\Gamma}_{p,n}^{N}
\overline{Q}_{p,n}(\varphi)-\overline{\Gamma }_{p-1}^{N}
\overline{Q}_{p-1,n}(\varphi)\bigr\rrvert ^{r}
\bigr]^{1/r}.\label
{eq:Gamma-gamma_mink}
\end{equation}
For each term in (\ref{eq:Gamma-gamma_mink}), using the above stated
conditional independence and measurability properties, we may apply
Lemma~\ref{lem:del_moral_moment} to establish the existence of an independent
constant $B(r)$, depending only on $r$ and such that
%
\begin{eqnarray}\label{eq:GammaN-GammaN}
& & \mathbb{E} \bigl[\bigl|\overline{\Gamma}_{p,n}^{N}\overline
{Q}_{p,n}(\varphi)-\overline{\Gamma}_{p-1}^{N}
\overline {Q}_{p-1,n}(\varphi)\bigr|^{r}|\mathcal{F}_{p-1}
\bigr]
\nonumber
\\
& &\quad =\mathbb{E} \biggl[\biggl|\sum_{i}
\beta_{p,n}^{i}\overline {W}_{p}^{i}
\Delta_{p,n}^{i}\biggr|^{r}\Big|\mathcal{F}_{p-1}
\biggr]
\\
& & \quad\leq B(r)\operatorname{osc} \bigl(\overline{Q}_{p,n}(\varphi)
\bigr)^{r} \biggl(\sum_{i} \bigl(
\beta_{p,n}^{i}\overline{W}_{p}^{i}
\bigr)^{2} \biggr)^{r/2},\nonumber
\end{eqnarray}
almost surely. The proof is completed upon combining this estimate
with (\ref{eq:Gamma-gamma_mink}).
\end{pf*}

\begin{pf*}{Proof of Theorem~\ref{thmm:convergence}}
For part (1), note
\[
\overline{\Gamma}_{n}^{N}(1)-\pi_{n}(1)=
\frac{Z_{n}^{N}}{Z_{n}}-1,
\]
then applying Proposition~\ref{prop:martingale} with $\varphi=1$
and using (\ref{eq:Gamma_telescope})--(\ref{eq:xi_cond_exp}) gives
\[
\mathbb{E}\bigl[Z_{n}^{N}\bigr]=Z_{n}.
\]

Moving to the proof of part (2), let us assume for now,
only (A1), \textup{(A2)} and (B), but not necessarily (B$^{+}$).
Define $c_{n}:=\sup_{x}g_{n}(x)/\pi_{n}(g_{n})$. Under (A1),
we have
\[
\mathrm{osc} \bigl(\overline{Q}_{p,n}(\varphi) \bigr)\leq2\llVert
\varphi\rrVert \sup_{x}\overline{Q}_{p,n}(1) (x)
\leq2\llVert \varphi\rrVert \prod_{q=p}^{n-1}c_{q}<+
\infty
\]
and also using Lemma~\ref{lem:W_n_representation}, (\ref{eq:W_bar_defn})
and the fact that each $\alpha_{p}$ is a Markov transition matrix,
we obtain
\[
0<\overline{W}_{p}^{i_{p}}\leq\sum
_{ (i_{0},\ldots,i_{p-1} )\in
[N]^{p}}\prod_{q=0}^{p-1}c_{q}
\alpha_{q}^{i_{q+1}i_{q}}=\prod_{q=0}^{p-1}c_{q}<+
\infty.
\]
From (\ref{eq:martingale_burkholder_bound}), we then obtain
%
\begin{eqnarray}\label{eq:L_r_bound_proof_max}
& & \mathbb{E} \bigl[\bigl\llvert \overline{\Gamma}_{n}^{N}(
\varphi)-\pi _{n}(\varphi)\bigr\rrvert ^{r}
\bigr]^{1/r}
\nonumber
\\
& &\quad \leq2\llVert \varphi\rrVert B(r)^{1/r} \Biggl(\prod
_{p=0}^{n-1}c_{p} \Biggr)\sum
_{p=0}^{n}\biggl\llvert \sum
_{i} \bigl(\beta _{p,n}^{i}
\bigr)^{2}\biggr\rrvert ^{1/2}
\\
& &\quad \leq2\llVert \varphi\rrVert B(r)^{1/r} \Biggl(\prod
_{p=0}^{n-1}c_{p} \Biggr)\sum
_{p=0}^{n}\Bigl\llvert \max_{i\in[N]}
\beta _{p,n}^{i}\Bigr\rrvert ^{1/2},\nonumber
\end{eqnarray}
where the final inequality holds because $ \{ \beta_{p,n}^{i}
\} _{i\in[N]}$
is a probability vector. Then invoking (B$^{+}$), the convergence
in (\ref{eq:convergence_L_r_statement_gam_weak}) follows from (\ref
{eq:L_r_bound_proof_max})
applied with $\varphi=1$. For (\ref{eq:convergence_L_r_statement_pi_weak}),
we apply Minkowski's inequality, the fact $\llvert \Gamma_{n}^{N}(\varphi
)/\Gamma_{n}^{N}(1)\rrvert \leq\llVert \varphi\rrVert $
and (\ref{eq:L_r_bound_proof_max}) twice to obtain
%
\begin{eqnarray}\label{eq:L_r_bound_proof_pi}
\mathbb{E} \bigl[\bigl\llvert \pi_{n}^{N}(\varphi)-
\pi_{n}(\varphi)\bigr\rrvert ^{r} \bigr]^{1/r} &
\leq& \mathbb{E} \bigl[\bigl\llvert \overline{\Gamma }_{n}^{N}(
\varphi)-\pi_{n}(\varphi)\bigr\rrvert ^{r}
\bigr]^{1/r}
\nonumber
\\
&&{} +  \mathbb{E} \biggl[\biggl\llvert \frac{\Gamma_{n}^{N}(\varphi)}{\Gamma
_{n}^{N}(1)}\biggr\rrvert
^{r}\bigl\llvert \overline{\Gamma}_{n}^{N}(1)-1
\bigr\rrvert ^{r} \biggr]^{1/r}
\\
& \leq& 4\llVert \varphi\rrVert \bigl[B(r) \bigr]^{1/r} \Biggl(\prod
_{p=0}^{n-1}c_{p} \Biggr)\sum
_{p=0}^{n}\Bigl\llvert \max
_{i\in[N]}\beta _{p,n}^{i}\Bigr\rrvert
^{1/2}.\nonumber
\end{eqnarray}
The convergence in probability then follows from Markov's inequality,
completing the proof of part (2).

For part (3), under (B$^{++}$) we have $\beta_{p,n}^{i}=1/N$,
and therefore $\llvert \max_{i\in[N]}\beta_{p,n}^{i}\rrvert ^{1/2}=N^{-1/2}$.
Substituting this into (\ref{eq:L_r_bound_proof_max}) with $\varphi=1$,
and into (\ref{eq:L_r_bound_proof_pi}), gives (\ref
{eq:convergence_L_r_statement_gamm})--(\ref{eq:convergence_L_r_statement_pi}).
The almost sure convergence follows from the Borel--Cantelli lemma.
\end{pf*}

\subsection*{Proofs and auxiliary results for Section \protect\ref{sec:stability}}

\begin{pf*}{Proof of Proposition~\ref{prop:norm_const_bound}} The proof follows
a similar line of argument to \cite{DelMoral2013book}, Proof of Theorem~16.4.1,
but applies to a more general algorithm than considered there. To
start, we apply Proposition~\ref{prop:martingale}, equation (\ref
{eq:Gamma_telescope})
with $\varphi=1$ and (\ref{eq:xi_cond_exp}), we obtain
\[
\mathbb{E} \biggl[ \biggl(\frac{Z_{n}^{N}}{Z_{n}}-1 \biggr)^{2} \biggr]=
\sum_{p=0}^{n}\sum
_{i_{p}}\mathbb{E} \bigl[ \bigl(\beta_{p,n}^{i_{p}}
\overline {W}_{p}^{i_{p}}\Delta_{p,n}^{i_{p}}
\bigr)^{2} \bigr].
\]
Under (B$^{++}$) we have $\beta_{p,n}^{i_{p}}=1/N$, then
using the other hypotheses of the proposition and noting $\operatorname
{osc} (Q_{n,n}(1) )=\operatorname{osc} (1 )=0$,
we have for $n\geq1$,
\begin{eqnarray*}
\mathbb{E} \biggl[ \biggl(\frac{Z_{n}^{N}}{Z_{n}}-1 \biggr)^{2} \biggr] &
= & \sum_{p=0}^{n}\sum
_{i}\mathbb{E} \biggl[\frac{1}{N^{2}} \bigl(\overline
{W}_{p}^{i} \bigr)^{2} \bigl(
\Delta_{p,n}^{i} \bigr)^{2} \biggr]
\\
& \leq& \sum_{p=0}^{n-1}\operatorname{osc}
\bigl(\overline{Q}_{p,n}(1) \bigr)^{2}\mathbb{E} \biggl[
\frac{1}{N^{2}}\sum_{i} \bigl(\overline
{W}_{p}^{i} \bigr)^{2} \biggr]
\\
& = & \sum_{p=0}^{n-1}\operatorname{osc}
\bigl(\overline{Q}_{p,n}(1) \bigr)^{2}\frac{1}{N}
\mathbb{E} \biggl[\frac{1}{\mathcal{E}_{p}^{N}} \biggl(\frac
{1}{N}\sum
_{i}\overline{W}_{p}^{i}
\biggr)^{2} \biggr]
\\
& \leq& \sum_{p=0}^{n-1}\frac{\operatorname{osc} (\overline
{Q}_{p,n}(1) )^{2}}{N\tau_{p}}
\mathbb{E} \biggl[ \biggl(\frac{1}{N}\sum_{i}
\overline{W}_{p}^{i}-1 \biggr)^{2}+1 \biggr]
\\
& = & \sum_{p=0}^{n-1}\frac{\operatorname{osc} (\overline{Q}_{p,n}(1)
)^{2}}{N\tau_{p}}
\biggl(\mathbb{E} \biggl[ \biggl(\frac
{Z_{p}^{N}}{Z_{p}}-1 \biggr)^{2}
\biggr]+1 \biggr),
\end{eqnarray*}
where last two lines use $N^{-1}\sum_{i}\overline{W}_{p}^{i}=Z_{p}^{N}/Z_{p}$
and by Theorem~\ref{thmm:convergence}, $\mathbb{E} [Z_{p}^{N} ]=Z_{p}$.
\end{pf*}

\begin{pf*}{Proof of Proposition~\ref{prop:L_p_bound_mixing}}
First, note that
by the same arguments as in the proof of Proposition~\ref{prop:martingale},
equation (\ref{eq:GammaN-GammaN}), we have for any $\phi\in\mathcal{L}$,
$0\leq p\leq n$,
%
\begin{eqnarray}\label{eq:GammaN-GammaN_mixing_proof}
& & \mathbb{E} \bigl[\bigl|\Gamma_{p,n}^{N}Q_{p,n}(
\phi)-\Gamma _{p-1}^{N}Q_{p-1,n}(\phi)\bigr|^{r}|
\mathcal{F}_{p-1} \bigr]
\nonumber
\\[-8pt]
\\[-8pt]
\nonumber
& &\quad \leq B(r)\operatorname{osc} \bigl(Q_{p,n}(\phi)
\bigr)^{r} \biggl(\sum_{i} \bigl(
\beta_{p,n}^{i}W_{p}^{i}
\bigr)^{2} \biggr)^{r/2},
\end{eqnarray}
with the convention $\Gamma_{-1}^{N}Q_{-1,n}(\phi)=\gamma_{n}(\phi)$.

For the remainder of the proof, fix $\varphi\in\mathcal{L}$ arbitrarily,
and set $\bar{\varphi}:=\varphi-\pi_{n}(\varphi)$. Defining
\[
D_{p,n}^{N}:=\frac{\Gamma_{p,n}^{N}Q_{p,n} (\bar{\varphi}
)}{\Gamma_{p,n}^{N}Q_{p,n}(1)},\qquad 0\leq p\leq n,
\]
and then noting
\[
D_{n,n}^{N}=\frac{\Gamma_{n}^{N} (\bar{\varphi} )}{\Gamma
_{n}^{N}(1)}=\pi_{n}^{N}(
\varphi)-\pi_{n}(\varphi),
\]
we shall focus on the decomposition:
%
\begin{equation}
\pi_{n}^{N}(\varphi)-\pi_{n}(\varphi)  =
D_{0,n}^{N}+\sum_{p=1}^{n}D_{p,n}^{N}-D_{p-1,n}^{N},\label{eq:pi_n^N_decomp}
\end{equation}
with the convention that the summation is zero when $n=0$.

For $1\leq p\leq n$, write
\[
D_{p,n}^{N}-D_{p-1,n}^{N}=T_{p,n}^{ (N,1 )}+T_{p,n}^{
(N,2 )},
\]
where
\begin{eqnarray*}
T_{p,n}^{ (N,1 )} & := & \frac{1}{\Gamma
_{p,n}^{N}Q_{p,n}(1)} \bigl[
\Gamma_{p,n}^{N}Q_{p,n} (\bar{\varphi } )-
\Gamma_{p-1,n}^{N}Q_{p-1,n} (\bar{\varphi} ) \bigr],
\\
T_{p,n}^{ (N,2 )} & := & \frac{\Gamma_{p-1,n}^{N}Q_{p-1,n}(\bar
{\varphi})}{\Gamma_{p-1,n}^{N}Q_{p-1,n}(1)}\frac{ [\Gamma
_{p-1,n}^{N}Q_{p-1,n} (1 )-\Gamma_{p,n}^{N}Q_{p,n} (1
) ]}{\Gamma_{p,n}^{N}Q_{p,n} (1 )}.
\end{eqnarray*}

We have the estimates
%
\begin{equation}
\frac{\operatorname{osc} (Q_{p,n}(\phi) )}{\inf_{x}Q_{p,n} (1
)(x)}\leq2\delta_{p,n}\bigl\llVert P_{p,n}(
\phi)\bigr\rrVert \label
{eq:osc/inf_bound}
\end{equation}
(which is finite under assumption (C) -- see also (\ref
{eq:Q_p,n_bounded})),
and
%
\begin{equation}
\biggl\llvert \frac{\Gamma_{p-1,n}^{N}Q_{p-1,n}(\phi)}{\Gamma
_{p-1,n}^{N}Q_{p-1,n}(1)}\biggr\rrvert  \leq \bigl\llVert
P_{p-1,n}(\phi )\bigr\rrVert .\label{eq:Gamma/Gamma_bound}
\end{equation}
Applying (\ref{eq:GammaN-GammaN_mixing_proof}) with $\phi=\bar{\varphi}$,
using (\ref{eq:osc/inf_bound}) and noting that $\Gamma_{p,n}^{N}(1)$
is measurable w.r.t. $\mathcal{F}_{p-1}$, we obtain
\begin{eqnarray*}
\mathbb{E} \bigl[\bigl|T_{p,n}^{ (N,1 )}\bigr|^{r} |
\mathcal{F}_{p-1} \bigr]^{1/r} & \leq& B(r)^{1/r}2
\delta_{p,n}\frac
{\llVert  P_{p,n}(\bar{\varphi})\rrVert }{\Gamma
_{p,n}^{N}(1)} \biggl(\sum_{i}
\bigl(\beta_{p,n}^{i}W_{p}^{i}
\bigr)^{2} \biggr)^{1/2}.
\end{eqnarray*}
Applying (\ref{eq:GammaN-GammaN_mixing_proof}) with $\phi=1$, using
(\ref{eq:Gamma/Gamma_bound}) and the same measurability condition,
we obtain
\begin{eqnarray*}
\mathbb{E} \bigl[\bigl|T_{p,n}^{ (N,2 )}\bigr|^{r} |
\mathcal{F}_{p-1} \bigr]^{1/r}  \leq B(r)^{1/r}2
\delta_{p,n}\frac
{\llVert  P_{p-1,n}(\bar{\varphi})\rrVert }{\Gamma
_{p,n}^{N}(1)} \biggl(\sum_{i}
\bigl(\beta_{p,n}^{i}W_{p}^{i}
\bigr)^{2} \biggr)^{1/2}.
\end{eqnarray*}
Therefore, via Minkowski's inequality and using
\[
\bigl\llVert P_{p-1,n}(\bar{\varphi})\bigr\rrVert =\bigl\llVert
Q_{p-1,n}(\bar{\varphi})/Q_{p-1,n}(1)\bigr\rrVert =\bigl\llVert
Q_{p}Q_{p,n}(\bar{\varphi})/Q_{p-1,n}(1)\bigr\rrVert
\leq\bigl\llVert P_{p,n}(\bar{\varphi})\bigr\rrVert ,
\]
we have
%
\begin{equation}
\mathbb{E} \bigl[\bigl\llvert D_{p,n}^{N}-D_{p-1,n}^{N}
\bigr\rrvert ^{r} \bigr]^{1/r}\leq B(r)^{1/r}4
\delta_{p,n}\bigl\llVert P_{p,n}(\bar{\varphi })\bigr\rrVert
\mathbb{E} \bigl[\bigl\llvert \mathcal{C}_{p,n}^{N}\bigr
\rrvert ^{r} \bigr]^{1/r}.\label{eq:D_p-D_p_proof}
\end{equation}

For the remaining term, $D_{0,n}^{N}$, we have
\[
D_{0,n}^{N} = \frac{1}{\Gamma_{0,n}^{N}Q_{0,n}(1)} \bigl(\Gamma
_{0,n}^{N}Q_{0,n} (\bar{\varphi} )-
\gamma_{n}(\bar{\varphi }) \bigr),
\]
where the final equality holds since $\gamma_{n}(\bar{\varphi})=\gamma
_{n}(\varphi)-\gamma_{n}(1)\pi_{n}(\varphi)=0$.
Using (\ref{eq:GammaN-GammaN_mixing_proof}) and (\ref{eq:osc/inf_bound})
in a similar fashion to above, we obtain
%
\begin{equation}
\mathbb{E} \bigl[\bigl\llvert D_{0,n}^{N}\bigr\rrvert
^{r} \bigr]^{1/r}\leq B(r)^{1/r}2
\delta_{0,n}\bigl\llVert P_{0,n}(\bar{\varphi})\bigr\rrVert
\mathbb{E} \bigl[\bigl\llvert \mathcal{C}_{0,n}^{N}\bigr
\rrvert ^{r} \bigr]^{1/r}.\label{eq:D_0_proof}
\end{equation}
The proof is complete upon using Minkowski's inequality to bound the
moments of (\ref{eq:pi_n^N_decomp}) using (\ref{eq:D_p-D_p_proof})
and (\ref{eq:D_0_proof}).
\end{pf*}

\begin{lem}[(\cite{smc:the:CdMG11}, Corollary~5.2)]
\label{lem:var_bound} Suppose
that assumptions \textup{(A2)} and \textup{(B)} hold. If
%
\begin{equation}
\sup_{n\geq1}\mathbb{E} \biggl[ \biggl(\frac{Z_{n}^{N}}{Z_{n}}
\biggr)^{2} \biggr]^{1/n}\leq1+\frac{c_{1}}{N\tau},\label{eq:var_bound_lemma_hyp}
\end{equation}
then
\[
N\tau\geq nc_{1} \quad\Longrightarrow\quad \mathbb{E} \biggl[ \biggl(
\frac
{Z_{n}^{N}}{Z_{n}}-1 \biggr)^{2} \biggr]\leq\frac{2c_{1}n}{N\tau}.
\]
\end{lem}

\begin{pf}
Under \textup{(A2)} and (B), we have by Theorem~\ref{thmm:convergence} that
$\mathbb{E}[Z_{n}^{N}]=Z_{n}$. The hypothesis (\ref{eq:var_bound_lemma_hyp})
can then be stated equivalently as
\[
\mathbb{E} \biggl[ \biggl(\frac{Z_{n}^{N}}{Z_{n}}-1 \biggr)^{2} \biggr]
\leq \biggl(1+\frac{c_{1}}{N\tau} \biggr)^{n}-1\qquad \forall n\geq1.
\]
Using the fact that $\log(1+x)\leq x$ for any $x\geq0$ and $\mathrm{e}^{x}\leq1+2x$
for any $x\in[0,1]$, we conclude that
\[
\biggl(1+\frac{c_{1}n}{N\tau} \biggr)^{n}-1=\exp \biggl[n\log \biggl(1+
\frac
{c_{1}}{N\tau} \biggr) \biggr]-1\leq\exp \biggl(\frac{c_{1}n}{N\tau} \biggr)-1
\leq\frac{2c_{1}n}{N\tau}
\]
for any $N\tau\geq c_{1}n$.
\end{pf}

\subsection*{Proofs for Section \protect\ref{sec:Discussion}}
\begin{pf*}{Proof of Lemma~\ref{lem:complexity}} Label the vertices of the
graph corresponding to $A$ arbitrarily with the integers $[N]$.
Let $s\geq1$ be the number of connected components of this graph.
Then for each $\ell\in[s]$ let $B(\ell)$ be the set of labels of
the $\ell$th connected component. Since $A$ is a B-matrix, each
connected component is complete, so we have for any $\ell\in[s]$
and $i\in B(\ell)$,
%
\begin{equation}
W_{n}^{i}=\sum_{j}
\alpha_{n-1}^{ij}W_{n-1}^{j}g_{n-1}
\bigl(\zeta _{n-1}^{j}\bigr)=\bigl\llvert B(\ell)\bigr\rrvert
^{-1}\sum_{j\in B(\ell
)}W_{n-1}^{j}g_{n-1}
\bigl(\zeta_{n-1}^{j}\bigr).\label{eq:W_n_complexity}
\end{equation}
The complexity of calculating $W_{n}^{i}$ is thus $\mathrm{O}(\llvert B(\ell)\rrvert )$,
and since $W_{n}^{i}=W_{n}^{j}$ for all $i,j\in B(\ell)$, the complexity
of calculating $ \{ W_{n}^{i} \} _{i\in[N]}$ is $\mathrm{O} (\sum_{\ell\in[s]}\llvert B(\ell)\rrvert  )=\mathrm{O}(N)$.
Arguing similarly to (\ref{eq:W_n_complexity}), with $\alpha_{n-1}=A$
we find that under Algorithm \ref{alg:aSMC}, for each $\ell\in[s]$,
the $ \{ \zeta_{n}^{i} \} _{i\in B(\ell)}$ are conditionally
i.i.d. according to
%
\begin{equation}
\frac{\sum_{j\in B(\ell)}W_{n-1}^{j}g_{n-1}(\zeta_{n-1}^{j})f(\zeta
_{n-1}^{j},\cdot)}{\sum_{j\in B(\ell)}W_{n-1}^{j}g_{n-1}(\zeta
_{n-1}^{j})}.\label{eq:complexity_block_distbn}
\end{equation}
By the same arguments used in \cite{carpenter1999improved} to address
the BPF, drawing $\llvert B(\ell)\rrvert $ samples from (\ref
{eq:complexity_block_distbn})
can be achieved at $\mathrm{O}(\llvert B(\ell)\rrvert )$ complexity, and thus
the overall complexity of the sampling part of Algorithm \ref{alg:aSMC}
is $\mathrm{O}(\sum_{\ell\in[s]}\llvert B(\ell)\rrvert )=\mathrm{O}(N)$.
\end{pf*}

\begin{pf*}{Proof of Proposition~\ref{prop:Upon-termination-of}} We prove
(\ref{eq:W_k_explicit}) by induction. We have
\[
\mathbb{W}_{0}^{i}=W_{n-1}^{i}g_{n-1}
\bigl(\zeta_{n-1}^{i}\bigr)=\sum_{j\in
B(0,i)}W_{n-1}^{j}g_{n-1}
\bigl(\zeta_{n-1}^{j}\bigr)
\]
and, when (\ref{eq:W_k_explicit}) holds at rank $k$, we have at
rank $k+1$,
\begin{eqnarray*}
\mathbb{W}_{k+1}^{i} & = & \mathbb{W}_{k}^{\mathcal
{I}_{k}(2i-1)}/2+
\mathbb{W}_{k}^{\mathcal{I}_{k}(2i)}/2
\\
& = & 2^{-(k+1)}\sum_{j\in B(k,\mathcal
{I}_{k}(2i-1))}W_{n-1}^{j}g_{n-1}
\bigl(\zeta_{n-1}^{j}\bigr)
\\
&&{} +  2^{-(k+1)}\sum_{j\in B(k,\mathcal
{I}_{k}(2i))}W_{n-1}^{j}g_{n-1}
\bigl(\zeta_{n-1}^{j}\bigr)
\\
& = & 2^{-(k+1)}\sum_{j\in B(k,\mathcal{I}_{k}(2i-1))\cup B(k,\mathcal
{I}_{k}(2i))}W_{n-1}^{j}g_{n-1}
\bigl(\zeta_{n-1}^{j}\bigr)
\\
& = & 2^{-(k+1)}\sum_{j\in B(k+1,i)}W_{n-1}^{j}g_{n-1}
\bigl(\zeta_{n-1}^{j}\bigr).
\end{eqnarray*}
Finally, for any $i\in[N/2^{k}]$ and $j\in B(k,i)$
\[
W_{n}^{j}=\sum_{\ell}
\alpha_{n-1}^{i\ell}W_{n-1}^{\ell}g_{n-1}
\bigl(\zeta _{n-1}^{\ell}\bigr)=2^{-k}\sum
_{\ell\in B(k,i)}W_{n-1}^{\ell}g_{n-1}\bigl(
\zeta _{n-1}^{\ell}\bigr)=\mathbb{W}_{k}^{i},
\]
which establishes (\ref{eq:W_k_explicit})--(\ref{eq:W_equals_bb_W}).

No matter what adaptation rule of Table~\ref{tab:Choosing_I_k} is
used, each quantity $ \{ B(k,i) \} _{i\in[N/2^{k}]}$ obtained
by Algorithm \ref{alg:generic adaptation} is, by construction, a
partition of $[N]$ and thus the $\alpha_{n-1}$ output by Algorithm
\ref{alg:generic adaptation} is a B-matrix. Noting that a B-matrix
always admits the uniform distribution on $[N]$ as an invariant distribution,
we have for any B-matrix, say $A$, the identity $\overline{\mathbb
{W}}_{0}=N^{-1}\sum_{i}\sum_{j}A^{ij}W_{n-1}^{i}g_{n-1}(\zeta_{n-1}^{i})$
and so upon termination of the ``while'' loop in Algorithm \ref
{alg:generic adaptation},
$\mathcal{E}=\mathcal{E}_{n}^{N}$ and hence $\mathcal{E}_{n}^{N}\geq\tau$
always.

For the Simple and Random adaptation rules, the worst case complexity
of Algorithm \ref{alg:generic adaptation} is as follows. The part
of the algorithm preceding the ``while'' loop is $\mathrm{O}(N)$. The complexity
of iteration $k$ of the ``while'' loop is $\mathrm{O}(N/2^{k})$, the worst
case is when the loop terminates with $k=m$, in which case the complexity
of the ``while'' loop is $\mathrm{O}(\sum_{k=0}^{m}N/2^{k})$, thus the overall
complexity is no more than $\mathrm{O}(N)$.

For the Greedy procedure, the sort operation required to obtain
$\mathcal{I}_{k}$
is of complexity $\mathrm{O} (N/2^{k}\log_{2} (N/2^{k} ) )$,
and so in the worst case, the complexity of the ``while'' loop is
of the order
\[
t(N):=\sum_{k=0}^{m}\frac{N}{2^{k}}
\log_{2} \biggl(\frac{N}{2^{k}} \biggr),
\]
or expressed recursively, $t(N)=t(N/2)+N\log_{2}N$, and $t(2)=2$.
A simple induction shows that this recursion has solution
$t(N)=2[1+N(\log_{2}N-1)]$,
hence the overall worst case complexity of the ``while'' loop is
$\mathrm{O}(N\log_{2}N)$. The proof is complete since by Lemma~\ref{lem:complexity},
the complexity of operations in Algorithm \ref{alg:aSMC} other than
line $(\star)$ is $\mathrm{O}(N)$.
\end{pf*}
\end{appendix}

\section*{Acknowledgements}

The authors would like to thank the Associate Editor and a referee
for helpful comments and corrections. The first and third authors
were supported by EPSRC grant EP/K023330/1.



%




\printhistory
\end{document}